\title{Complexity Analysis of Precedence Terminating Infinite Graph
Rewrite Systems}
\author{Naohi Eguchi%
\thanks{The author is supported by Grants-in-Aid for JSPS Fellows (Grant No.
$25 \cdot 726$).}
\institute{Faculty of Science,
Chiba University,
Japan%
}
\email{neguchi@g.math.s.chiba-u.ac.jp}
}
\theoremstyle{plain}
\newtheorem{theorem}{Theorem}
\newtheorem{corollary}{Corollary}
\newtheorem{lemma}{Lemma}
\newtheorem{claim}{Claim}
\theoremstyle{definition}
\newtheorem{definition}{Definition}
\theoremstyle{remark}
\newtheorem{remark}{Remark}
\newtheorem{example}{Example}
\begin{document}

\maketitle

\begin{abstract}
The general form of {\em safe recursion} (or {\em ramified recurrence})
 can be expressed by an infinite graph rewrite system including
{\em unfolding graph rewrite rules} introduced by Dal Lago, Martini and
Zorzi, in which the size of every normal
 form by innermost rewriting is polynomially bounded.
Every unfolding graph rewrite rule is {\em precedence terminating} in
 the sense of Middeldorp, Ohsaki and Zantema.
Although precedence terminating infinite rewrite systems cover all the
 primitive recursive functions, in this paper we consider graph rewrite
 systems {\em precedence terminating with argument separation}, which
 form a subclass of precedence terminating graph rewrite systems.
We show that for any precedence
 terminating infinite graph rewrite system $\GS$ with a specific argument separation,  both the runtime complexity of
$\GS$ and the size of every normal form in $\GS$ can be polynomially bounded.
As a corollary, we obtain an alternative proof of the original result
 by Dal Lago et al.
\end{abstract}

\section{Introduction}

\subsection{Backgrounds}

In this paper we present a complexity analysis
of a specific kind of infinite graph rewrite systems,
{\em precedence terminating with argument separation}.
The formulation of precedence termination with argument separation stems from a
function-algebraic characterization of the polytime computable functions 
based on the principle known as {\em safe recursion} \cite{BC92} or
{\em tiered recursion} \cite{Leivant95}.
The schema of safe recursion is a syntactic restriction of the standard primitive
recursion based on a specific separation of argument positions of functions
into two kinds. 
Notationally, the separation is indicated by semicolon as 
$f(\sn{x_1, \dots, x_k}{x_{k+1}, \dots, x_{k+l}})$, 
where $x_1, \dots, x_k$ are called {\em normal} arguments while
$x_{k+1}, \dots, x_{k+l}$
are called {\em safe} ones.
The schema of safe recursion formalizes the idea that 
recursive calls are restricted on normal arguments whereas
substitutions of recursion terms are restricted for safe arguments:
$f(0, \vec y; \vec z) = g(\vec y; \vec z)$,
$f(c_i (x), \vec y; \vec z) = 
h_i (x, \vec y; \vec z, f(x, \vec y; \vec z))$
$(i \in I)$,
where $I$ is a finite set of indices.
Safe recursion is sound for polynomial
runtime complexity over unary constructors, i.e., over numerals or lists,
but it was not clear whether the general form of safe recursion over
arbitrary constructors, which is called 
{\em general ramified recurrence} \cite{GRR2010} or 
{\em general safe recursion}, could be related to polytime computability as well.
\begin{equation}
\label{eq:gsr}
f(c_i (x_1, \dots, x_{\arity (c_i)}), \vec y; \vec z) = 
h_i (\vec x, \vec y; \vec z, f(x_1, \vec y; \vec z), \dots,
                             f(x_{\arity (c_i)}, \vec y; \vec z)
    )
\ (i \in I)
\tag{\textbf{General Safe Recursion}}
\end{equation}
The authors of \cite{GRR2010} answered this question positively
(Theorem \ref{t:GRR10}, Section \ref{s:uf}) showing
that the schema \eqref{eq:gsr} can be expressed by an infinite set of
{\em unfolding graph rewrite rules}.
To see a reason why graph rewriting was employed, consider a term
rewrite system
 $\RS$ over the constructors 
$\{ \epsilon, \m{c}, \m{0}, \ms \}$
consisting of the following four rules with
the argument separation indicated in the rules.
\\ $\mbox{}$ \hfill
%
$
\begin{array}{rclrcl}
\mg (\sn{\epsilon}{z}) & \rightarrow & z & \qquad
\mg (\sn{\m{c} (\sn{}{x, y})}{z}) & \rightarrow &
\m{c} (\sn{}{\mg (\sn{x}{z}), \mg (\sn{y}{z})}) \\
\mf (\sn{\m{0}, y}{}) & \rightarrow & \epsilon &
\mf (\sn{\ms (\sn{}{x}), y}) & \rightarrow & \mg (\sn{y}{\mf (\sn{x, y}{})})
\end{array}
$
\hfill \ \\
Reduction of a term
$\mf (\ms^m (\m{0}), t)$ generates a tree consisting of exponentially
many copies of the tree $t$ measured by $m$.
Thus the computation should be performed over suitably shared graphs rather
than terms.
Moreover, the term
$\mf (\ms (\m{0}), \m{c} (\epsilon, \epsilon))$
leads to the term
$\m{c} (\mg (\epsilon, \epsilon), \mg (\epsilon, \epsilon))$
in three steps, where the subterm
$\mg (\epsilon, \epsilon)$ is duplicated, which means that costly
recomputations potentially occur.
Such duplications cannot be avoided by simple sharing but some essential
memoization technique is necessary.

\subsection{Outline}

The most effort in \cite{GRR2010} was devoted to show that unfolding graph rewrite
rules expressing the schema \eqref{eq:gsr} only yield polynomial lengths of
rewriting sequences and normal forms of polynomial sizes measured by the
sizes of starting (term) graphs.
The initial motivation of the present work was to deduce the complexity result by
means of existing term rewriting techniques.
In a technical report \cite{Eguchi_wst14}, rewriting sequences under
unfolding graph rewrite rules are embedded into descending sequences
under a termination order over lists of terms via a variant of the
{\em predicative interpretation}
\cite{AM05,AEM11,AEM12}.
In this paper, making the definition of unfolding graph
rewrite rules more abstract, we define a class of graph rewrite systems
precedence terminating with argument separation.
Though the complexity analysis in the report above could be adopted, 
we avoid the use of intermediate termination orders but make use of 
numerical interpretation methods, which have been established as well as
termination orders, e.g. \cite{Bonfante01}.
The performed numerical interpretation is closely related to the predicative
interpretation but also strongly motivated by polynomial
{\em quasi-}interpretations presented in 
\cite{Bonfante01LPO,Marion03,Bonfante11}.
After preliminary sections, in Section \ref{s:ptas},
we show that every graph rewrite system precedence terminating with a
specific argument separation reduces under the associated interpretation
(Theorem \ref{t:context}), yielding an alternative proof of 
Theorem \ref{t:GRR10}
(Corollary \ref{c:main}).
In Section \ref{s:app}, to convince readers that the proposed method is
indeed (potentially) more flexible than unfolding graph rewrite rules,
we discuss two possibilities of application referring to related works.

\section{Term graph rewriting}
\label{s:pre}

In this section, we present basics of term graph rewriting mainly
following \cite{BareEGKPS87}.
\begin{definition}[Signatures, labeled graphs and paths]
\normalfont
Let $\FS$ be a {\em signature}, a set of function symbols, and
let 
$\arity : \FS \rightarrow \mathbb N$
where $\arity (f)$ is called the
{\em arity} of $f$.
Throughout of the paper, we only consider finite signatures.
We assume that $\FS$ is partitioned into the set $\CS$ of
constructors and the set $\DS$ of defined symbols.

Let $G = (V_G, E_G)$ be a directed graph consisting of a set
$V_G$ of vertices (or nodes) and a set $E_G$ of directed edges.
A {\em labeled graph} is a triple 
$(G, \lab_G, \att_G)$
of an acyclic directed graph $G = (V_G, E_G)$,
a partial {\em labeling} function
$\lab_G: V_G \rightarrow \FS$ 
and a (total) {\em successor} function
$\att_G: V_G \rightarrow V_G^\ast$, 
mapping a node $v \in V_G$ to a sequence of nodes of length
$\arity (\lab_G)$,
such that
if
$\att_G (v) = v_1, \dots, v_k$, 
then
$\{ v_1, \dots, v_k \} = 
 \{ u \in V_G \mid (v, u) \in E_G \}$.
In case $\att_G (v) = v_1, \dots, v_k$,
the node $v_j$ is called the {\em $j^{\text{th}}$ successor of $v$} for every
$j \in \{ 1, \dots, k \}$.
In particular, $\att_G (v)$ is empty if $\lab_G (v)$ is not defined.

A list 
$\nseq{v_1, m_1, \dots, v_{k-1}, m_{k-1}, v_k}$
consisting of nodes $v_1, \dots, v_m$ of a term graph $G$ and naturals
$m_1, \dots$, $m_{k-1}$ is called a {\em path} from $v_1$ to $v_k$
of {\em length} $k$ if $v_{j+1}$ is the $m_j^{\text{th}}$ successor of $v_j$
for each $j \in \{ 1, \dots, k-1 \}$.
In case $k=0$, the list $\nseq{v}$ consisting of a single node $v$ is a
trivial path of length $0$.
A labeled graph $(G, \lab_G, \att_G)$ is {\em closed} if 
the labeling function $\lab_G$ is total.
\end{definition}
\begin{definition}[Term graphs, sub-term graphs, basic term graphs,
 depths of term graphs and maximal sharing]
A quadruple $(G, \lab_G, \att_G, \rootnode_G)$ is a {\em term graph} if 
$(G, \lab_G, \att_G)$ is a labeled graph and
$\rootnode_G$ is the {\em root} of $G$, i.e.,
a unique node in $V_G$ from which every node is reachable.
We write $\mathcal{TG(F)}$ to denote the set of term graphs over a
 signature $\mathcal F$.
For a labeled graph 
$G = (G, \att_G, \lab_G)$
and a node $v \in V_G$,
$G \seg v$ denotes the {\em sub-term graph} of $G$ rooted at $v$.
We write $H \subg G$ to express that $H$ is a sub-term graph of $G$ and 
$\ssubg$ for the proper relation.
A term graph $G \in \TG{F}$ is called {\em basic} if
$\lab_G (\rootnode_G) \in \DS$
and
$G \seg v \in \TG{C}$ 
for every successor node $v$ of $\rootnode_G$.
For a term graph $G$, the length of the longest path(s) from
$\rootnode_G$ the {\em depth} of $G$, denoted as $\depth (G)$.

Undefined nodes in a term graph $G$ are intended to behave as free
variables in a natural term representation of $G$.
Let 
$\term_G$ be an injective mapping from undefined nodes in $G$
to a (possibly infinite) set
$\VS$ of variables.
The mapping $\term_G$ is canonically extended to a term
representation (over $\FS \cup \VS$) of sub-term graphs of
$G$ as
$\term_G (G \seg v) = \term_G (v) \in \VS$
in case $\lab_G (v)$ is not defined, and otherwise
$\term_G (G \seg v) = \lab_G (v)
 (\term_G (G \seg v_1), \dots, \term_G (G \seg v_k))$
where $\att_G (v) = v_1, \dots, v_k$.
A term graph $G$ is
{\em maximally shared} if,
for any two nodes $u, v \in V_G$,
$\term_G (G \seg u) = \term_G (G \seg v)$ implies
$u = v$
(under an arbitrary choice of such a mapping $\term_G$).
\end{definition}
\begin{definition}[Homomorphisms, redexes, graph rewrite rules and
 constructor graph rewrite rules]
Given two labeled graphs $G$ and $H$,
a {\em homomorphism} from $G$ to $H$ is a mapping
$\varphi: V_G \rightarrow V_H$ 
such that 
$\lab_H (\varphi (v)) = \lab_G (v)$ for each 
$v \in \dom{\lab_G} \subseteq V_G$ and that,
for each $v \in \dom{\lab_G}$, if $\att_G (v) = v_1, \dots, v_k$, then
$\att_H (\varphi (v)) = \varphi (v_1), \dots, \varphi (v_k)$.
By definition, these conditions are not required for a node $v \in V_G$ for which
$\lab_G (v)$ is not defined.
A homomorphism $\varphi$ from a term graph $G$ to another term graph
$H$ is a homomorphism 
$\varphi : (G, \lab_G, \att_G) \rightarrow (H, \lab_H, \att_H)$
such that $\rootnode_H = \varphi (\rootnode_G)$.

A {\em graph rewrite rule} is a triple $\rho = (G, l, r)$ of a labeled
graph $G$ and two distinct nodes $l$ and $r$ respectively called
the {\em left} and {\em right} root. 
The term rewrite rule
$\mg (x, y) \rightarrow \m{c} (y, y)$ 
is expressed by a graph rewrite rule (1) and
$\mh (x, y, z, w) \rightarrow \m{c} (z, w)$
by (2) below.
\\
$\mbox{}$ \hfill
$
  (1) \quad
  \xymatrix{*+[o][F-]{\mg} \ar[d] \ar[dr] &
            *+[F]{\m{c}} \ar@/_/[d] \ar@/^/[d] \\
             \bot & \bot
           }
  \qquad \qquad
  (2)
  \xymatrix{& *+[o][F-]{\mh} \ar[dl] \ar[d] \ar[dr] \ar[drr] & &
            *+[F]{\m{c}} \ar[dl] \ar[d] \\
            \bot & \bot & \bot & \bot
           }
$
\hfill $\mbox{}$ \\
In the examples, the left root is written in a circle while the right
root is in a square, and
undefined nodes are indicated as $\bot$.
As in the usual term rewriting setting, we assume that every undefined
node occurring in $G \seg r$ occurs in $G \seg l$ as well.

A {\em redex} in a term graph $G$ is a pair 
$(R, \varphi)$ of a rewrite rule
$R = (H, l, r)$ and a homomorphism
$\varphi : H \seg l \rightarrow G$.
A set $\GS$ of graph rewrite rules is called a 
{\em graph rewrite system} (GRS for short).
A graph rewrite rule $(G, l, r)$ is called a {\em constructor} one if
$G \seg l$ is a basic term graph.
A GRS $\GS$ is called a constructor one if $\GS$ consists only of
constructor rewrite rules.
The rewrite relation in a GRS $\GS$ is defined by the {\em build},
{\em redirection} and {\em garbage collection} phase,
 denoted as $\rightarrow_{\GS}$
(See, e.g.,  \cite{GRR2010}).
In case that  $G \rew H$ is induced by a redex $((K, l, r), \varphi)$,
one can find a homomorphism
$\psi :K \seg r \rightarrow H$ compatible with $\varphi$ such that
$G$ results in $H$ by replacing the sub-term graph
$G \seg \varphi (l)$ of $G$ with 
$H \seg \psi (r)$.
A formal definition can be found in \cite{BareEGKPS87}.
The $m$-fold iteration of $\rew$ is denoted as 
$\rewm{m}$ and the reflective and transitive closure as $\rewast$.
A rewriting $G \rew H$ induced by a redex $((K_0, l_0, r_0), \varphi_0)$ is
{\em innermost} if there is no redex $((K, l, r), \varphi)$ such that
$G \seg \varphi (l)$ is a proper sub-term graph of
$G \seg \varphi_0 (l_0)$.
The innermost rewrite relation in $\GS$ is denoted as 
$\irew$, and $\irewm{m}$, $\irewast$ are defined
accordingly.
\end{definition}

\section{Unfolding graph rewrite rules for general safe recursion}
\label{s:uf}

To make the purpose of the present work precise, in this section  we
restate the main result in \cite{GRR2010},
formulating the {\em general safe recursive} functions.
Let $\CS$ be a set of constructors and 
$m \mapsto c_m$ ($1 \leq m \leq |\CS|$) be an enumeration for $\CS$.
We assume that $\CS$ contains at least one constant.
We call a function 
$f: \mathcal{T(C)}^{k+l} \rightarrow \mathcal{T(C)}$
{\em general safe recursive} if, under a suitable argument separation
$f(\sn{x_1, \dots, x_k}{y_1, \dots, y_l})$,
$f$ can be defined from the initial functions by operating the schemata
specified as follows.
\begin{itemize}
\item $O^{k, l}_j (\sn{x_1, \dots, x_k}{y_1, \dots, y_l}) = c_j$
      if $c_j$ is a constant.
\hfill {\bf (Constants)}
\item $C_j (; x_1, \dots, x_{\arity (c_j)}) = 
       c_j (x_1, \dots, x_{\arity (c_j)})$
      if $\arity (c_j) \neq 0$.
\hfill {\bf (Constructors)}
\item $I^{k, l}_j (\sn{x_1, \dots, x_k}{x_{k+1}, \dots, x_{k+l}}) = 
       x_{j}
      $
      $(1 \leq j \leq k+l)$.
\hfill {\bf (Projections)}
\item $P_{i, 0} (\sn{}{c_i}) = c_i$ $(\arity (c_i) = 0)$,
      $P_{i, j} (\sn{}{c_i (x_1, \dots, x_{\arity (c_i)})
                      }
                )
       = x_j$ $(1 \leq j \leq \arity (c_i))$.
\hfill {\bf (Predecessors)}
\item $C(\sn{}%
            {c_{j} (x_1, \dots, x_{\arity (c_j)}), 
             y_1, \dots, y_{|\CS|}
            }) =
       y_{j}$.
\hfill {\bf (Conditional)}
\item $f(\sn{x_1, \dots, x_k}{y_1, \dots, y_l}) =
       h(\sn{x_{j_1}, \dots, x_{j_m}}%
            {g_1 (\sn{\vec x}{\vec y}), \dots,
             g_n (\sn{\vec x}{\vec y})
            }
        )$
$(\{ j_1, \dots, j_m \} \subseteq \{ 1, \dots, k \})$,
\\
where $h$ has $m$ normal and $n$ safe argument positions.
\hfill {\bf (Safe composition)}
\item $f(\sn{c_j (x_1, \dots, x_{\arity (c_j)}), \vec y}{\vec z}) =
       h_j (\sn{\vec x, \vec y}%
               {\vec z, f(\sn{x_1, \vec y}{\vec z}), \dots,
                        f(\sn{x_{\arity (c_j)}, \vec y}{\vec z})
               }
           )$
$(j \in I)$.
\\
If $c_j$ is a constant, the schema of denotes
$f(\sn{c_j, \vec y}{\vec z}) = h_j (\sn{\vec y}{\vec z})$.
\hfill {\bf (General safe recursion)}
\end{itemize}

In \cite{GRR2010} a GRS $\GS$ is called 
{\em polytime presentable} if there exists a deterministic polytime
algorithm which, given a term graph $G$, returns a term graph $H$ such
that $G \irew H$ if such a term graph exists, or
the value $\m{false}$ if otherwise.
In addition, a GRS $\GS$ is {\em polynomially bounded} if there exists a
polynomial $p: \mathbb N \rightarrow \mathbb N$ such that
$\max \{ m, |H| \} \leq p (|G|)$ holds whenever $G \irewm{m} H$ holds.

\begin{theorem}[Dal Lago, Martini and Zorzi \cite{GRR2010}]
\label{t:GRR10}
Every general safe recursive function can be computed by a
polytime presentable and polynomially bounded constructor GRS. 
\end{theorem}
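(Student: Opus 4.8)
The plan is to split the argument into a \emph{syntactic} construction and a \emph{quantitative} analysis. For the construction, I would show by induction on the definition of a general safe recursive $f : \mathcal{T(C)}^{k+l} \to \mathcal{T(C)}$ that there is a finite constructor GRS $\GS_f$, over a signature extending $\CS$ with one fresh defined symbol per function symbol occurring in the definition of $f$ plus finitely many auxiliaries, such that the innermost normal form of $\mathsf f (\sn{\vec s}{\vec t})$ on constructor term graphs $\vec s, \vec t$ represents $f$ applied to the represented values. The base cases (constants, constructors, projections, predecessors, conditional) and safe composition are handled by finitely many rules whose left roots are basic term graphs; for safe composition one simply posts the rule $\mathsf f (\sn{\vec x}{\vec y}) \rew \mathsf h (\sn{x_{j_1}, \dots, x_{j_m}}{\mathsf g_1 (\sn{\vec x}{\vec y}), \dots, \mathsf g_n (\sn{\vec x}{\vec y})})$. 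The essential case is \textbf{General safe recursion}, where, following Dal Lago, Martini and Zorzi, I would use \emph{unfolding} graph rewrite rules whose left roots place the recursion symbol inside a constructor context, so that a single rewrite step unfolds the recursion over a whole constructor term graph at once and emits a \emph{maximally shared} graph of recursive calls. This built-in memoization is what rules out the duplication phenomenon illustrated in the introduction with $\mathsf f (\ms (\m 0), \m c (\epsilon, \epsilon))$.

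For the quantitative analysis I would first observe that every unfolding rule of $\GS_f$ is \emph{precedence terminating with a specific argument separation}: the symbol at the left root strictly dominates, in a suitable precedence, every defined symbol on the right-hand side, and recursive calls on the right occur only in safe argument positions while the recursion argument strictly descends in a normal position. I would then equip $\GS_f$ with a numerical interpretation $\llbracket \cdot \rrbracket$ on maximally shared term graphs --- in the spirit of predicative interpretations and polynomial quasi-interpretations --- in which constructors count nodes and each defined symbol is interpreted by a polynomial that may grow polynomially in the interpretations of its \emph{normal} arguments but only \emph{additively} in those of its \emph{safe} arguments. The precedence-termination-with-separation discipline is exactly what allows these polynomials to be chosen so that $\llbracket G \rrbracket > \llbracket H \rrbracket$ whenever $G \irew H$; this bounds the length $m$ of any sequence $G \irewm{m} H$ by $\llbracket G \rrbracket$, which is polynomial in $|G|$, and, read as a bound on the number of distinct subgraphs, it bounds $|H|$ by a polynomial in $|G|$ as well. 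Together with the fact, immediate from the construction, that $\GS_f$ is a constructor GRS, this gives polynomial boundedness.

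Polytime presentability is then comparatively routine: $\GS_f$ is finite and each left root is a fixed finite (possibly constructor-context) graph, so locating an innermost redex reduces to polynomially many subgraph matchings against a polynomially sized graph, and a single build/redirection/garbage-collection phase on such a graph is itself polytime; combined with the step bound above, this yields a deterministic polytime normalization procedure, hence a polytime algorithm deciding and performing one $\irew$-step.

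The main obstacle is the numerical interpretation. Unlike an ordinary term rewrite rule, an unfolding rule rewrites a nested constructor context in one step, so the interpretation must absorb an entire subtree's worth of recursion unfolding while still strictly decreasing; and the size bound must be taken \emph{modulo sharing}, since without memoization the emitted graph would be exponential in the starting graph. Calibrating the defined-symbol polynomials so that safe arguments are never multiplied --- so that nested substitutions of recursion terms into safe positions cannot cause super-polynomial blow-up --- is the heart of the argument, and the reason the normal/safe separation, rather than bare precedence termination, is required.
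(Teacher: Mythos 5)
Your overall decomposition---express the recursion case by unfolding graph rewrite rules, then bound derivations by a numerical interpretation reflecting the normal/safe separation---is the same route the paper takes (Theorem~\ref{l:gsr} together with Theorem~\ref{t:context} and Corollary~\ref{c:spt}, yielding Corollary~\ref{c:main}). But two of your claims do not survive the details. First, the witnessing GRS cannot be finite: an unfolding rule matches the \emph{entire} closed constructor graph in the recursion position, so the recursion case requires one rule per shape of that argument, i.e.\ an infinite family $\bigcup_{m \geq 1} \GS_m$ (the part $\Ginf$ in the paper); a finite set of ordinary recursion rules is exactly what the introduction's example excludes, since sharing alone cannot prevent recomputation of duplicated recursive calls. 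Consequently your polytime-presentability argument, which rests on finiteness (``$\GS_f$ is finite \dots polynomially many subgraph matchings''), does not apply; presentability must instead be argued, as in Dal Lago--Martini--Zorzi, by a uniform polytime procedure that, given a term graph, recognizes the closed constructor argument and synthesizes the corresponding unfolding instance.

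Second, and this is the genuine gap, the interpretation you postulate is left unconstructed, and the shape you describe is problematic. If constructors ``count nodes'' of the maximally shared graph, the interpretation is not compositional, so a rule-wise decrease does not lift to graph contexts in the standard way; if it is compositional (term-style), then on a shared constructor graph it measures the unfolded term and can be exponential in $|G|$, destroying the polynomial bound you want to read off. Moreover ``polynomial in normal, additive in safe'' is precisely the regime of polynomial quasi-interpretations, which, as the paper notes for $\RS_{\m{lcs}}$, say nothing about runtime; obtaining a \emph{strict} decrease of polynomially bounded magnitude is the whole difficulty, which you yourself flag but do not resolve. The paper's solution is not a fixed compositional interpretation at all: $\pint{\ell}$ depends on a parameter $\ell$ calibrated to the starting basic term graph via $2 \cdot (|\Vnrm{G_0}| + d) \leq \ell$, sums only over nodes on safe paths, weighs each by $(1+\ell)^{2 \cdot \rk(f)}$ times the \emph{depths} (not sizes) of normal arguments, and is shown to decrease only along derivations from closed basic term graphs; the proof needs the invariant that normal parts never grow beyond the initial graph (Lemma~\ref{l:basic}) and the sharing conditions (i)--(iv) to guarantee that the right-hand side applied at any step has size at most $\ell$ (Lemma~\ref{l:pint}, Theorem~\ref{t:context}). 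Finally, the bound on normal-form size is not read off the interpretation: it is a separate induction showing each step adds at most $|\Vnrm{G_0}| + 2d$ nodes (Lemma~\ref{l:size}). As it stands, your central step (``the polynomials can be chosen so that the interpretation strictly decreases'') asserts exactly the theorem's hard content rather than proving it.
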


\begin{remark}
The schema \eqref{eq:gsr} is formulated based on safe recursion (on
 notation) following \cite{BC92} whereas
the schema of general ramified recurrence formulated in \cite{GRR2010}
 is based on ramified recurrence following \cite{Leivant95}.
Due to the difference, the
 definition of general safe recursive functions above
 is slightly different from the original definition of
{\em tiered recursive} functions in \cite{GRR2010}.
Notably, the schema (\textbf{Safe composition}) is a weaker form of the
 original one in \cite{BC92}, which was introduced in \cite{HW99}.
It is not clear whether there is a precise correspondence between
 general safe recursive functions in the current formulation and tiered
 recursive functions.
However, it is known that the polytime functions (over binary words) can
 be covered with the weak form of safe composition,
which means that the restriction of the
 general safe recursive functions to unary constructors still
covers all the polytime functions.
\end{remark}

Theorem \ref{t:GRR10} is shown by induction over a general safe
recursive function $f$.
The case that $f$ is defined by \eqref{eq:gsr} is witnessed by an infinite set of 
{\em unfolding graph rewrite rules}.

\begin{definition}[Unfolding graph rewrite rules]
\label{d:uf}
\normalfont
Let $\Sigma$ and $\Theta$ be two disjoint signatures in a bijective
 correspondence by 
$\varphi: \Sigma \rightarrow \Theta$.
For a fixed $k \in \mathbb N$, suppose that
$\arity ( \varphi (g)) = 2 \arity (g) + k$
for each $g \in \Sigma$.
Let
$f \not\in \Sigma \cup \Theta$ and
$\arity (f) = 1+k$.
Given a natural $m \geq 1$, the
{\em $m^{\text{th}}$ set of unfolding graph rewrite rule over
$\Sigma$ and $\Theta$ defining $f$}
consists of graph rewrite rules of the form
$(G, l, r)$ where
$G = (V_G, E_G, \att_G, \lab_G)$
is a labeled graph over a signature 
$\FS \supseteq \Sigma \cup \Theta$
that fulfills the following conditions.
\begin{enumerate}
\item The set $V_G$ of vertices consists of $1 + 2m + k$ elements
      $y$, $v_1, \dots, v_m$, $w_1, \dots$, $w_m$, $x_1, \dots, x_k$.
\item $l = y$ and $r = w_1$.
\item $\lab_G (y) = f$ and
      $\att_G (y) = v_1, x_1, \dots, x_k$.
\label{d:uf:f}
\item $\lab_G (x_j)$ is not defined for all
      $j \in \{ 1, \dots, k \}$.
\item For each $j \in \{ 1, \dots, m \}$,
      $\att_G (v_j) \in \{ v_1, \dots, v_m \}^\ast$.
      Moreover,
      $V_{G \Seg v_1} = \{ v_1, \dots, v_m \}$.
\label{d:uf:v}
\item For each $j \in \{ 1, \dots, m \}$,
      $\lab_G (v_j) \in \Sigma$ and
      $\lab_G (w_j) = \varphi (\lab_G (v_j))$.
\item For each $j \in \{ 1, \dots, m \}$,
      $\att_G (w_j) = 
      v_{j_1}, \dots, v_{j_n}, x_1, \dots, x_k, w_{j_1}, \dots, w_{j_n}$
      if
      $\att_G (v_j) = v_{j_1}, \dots, v_{j_n}$.
\label{d:uf:w}
\end{enumerate}
\end{definition}
\begin{example}
\label{ex:ugr}
Let $\Sigma = \{ \m{0}, \ms \}$,
$\Theta = \{ \mg, \mh \}$,
$\varphi : \Sigma \rightarrow \Theta$
be a bijection defined as
$\m{0} \mapsto \mg$ and $\ms \mapsto \mh$,
and
$\mf \not\in \Sigma \cup \Theta$, where
the arities of $\m{0}, \ms, \mg, \mh, \mf$
are respectively $0$, $1$, $1$, $3$ and $2$.
Namely we consider the case $k=1$.
The standard equations 
$\mf (\m{0}, x) \rightarrow \mg (x)$,
$\mf (\ms (y), x) \rightarrow \mh (y, x, \mf (y, x))$
of primitive recursion can be expressed by the infinite GRS
$\bigcup_{m \geq 1} \GS_m$,
where $\GS_m$ is the $m^{\text{th}}$ set of unfolding graph rewrite rules over 
$\FS = \Sigma \cup \Theta \cup \{ \mf \}$ defining $\mf$.
In this case, for each $m \geq 1$, $\GS_m$ consists of a
 single rule.
For example, in case
$i = 1, 2, 3$, $\GS_i$ consists of the rewrite rule $(i)$
pictured in Figure 
\ref{fig:ugr}.
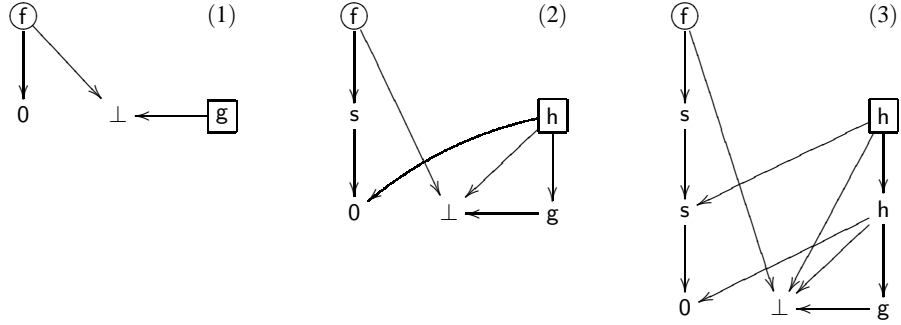
\begin{figure}[t]
\footnotesize
\begin{equation*}
  \xymatrix{*+[o][F-]{\mf} \ar[d] \ar[dr] & & (1) \\
            \m{0} & \bot & *+[F]{\mg} \ar[l]
           }
  \qquad \qquad
  \xymatrix{*+[o][F-]{\mf} \ar[d] \ar[ddr] & & (2) \\
            \ms \ar[d] & & *+[F]{\mh} \ar@/_/[dll] \ar[dl] \ar[d] \\
            \m{0} & \bot & \mg \ar[l]
           }
  \qquad \qquad
  \xymatrix{*+[o][F-]{\mf} \ar[d] \ar[dddr] & & (3) \\
            \ms \ar[d] & &
            *+[F]{\mh} \ar[dll] \ar[ddl] \ar[d] \\
            \ms \ar[d] & & \mh \ar[dll] \ar[dl] \ar[d] \\
            \m{0} & \bot & \mg \ar[l]
           }
\end{equation*}
\caption{Examples of unfolding graph rewrite rules}
\label{fig:ugr}
\end{figure}
As seen from the pictures, the unfolding graph rewrite rules in Figure \ref{fig:ugr} express the infinite instances
$\mf (\m{0}, x) \rightarrow \mg (x)$,
$\mf (\ms (\m{0}), x) \rightarrow \mh (\m{0}, x, \mg (x))$,
$\mf (\ms (\ms (\m{0})), x) \rightarrow 
 \mh (\ms (\m{0}), x, \mh (\m{0}, x, \mg (x)))$, ...,
representing terms as suitably shared term graphs.
\end{example}

To keep every term graph compatible with the associated argument separation, in \cite{GRR2010}, for
any redex $(R, \varphi)$, the homomorphism $\varphi$ is limited to
an {\em injective} one.
In this paper, instead of assuming injectivity of homomorphisms, we rather indicate
argument separations explicitly.

\begin{definition}[Term graphs with argument separation]
\label{d:safeterm}
\normalfont
In accordance with the idea of safe recursion, we assume that the argument
 positions of every function symbol are separated into the normal
 and safe ones, writing 
$f(\sn{x_1, \dots, x_k}{x_{k+1}, \dots, x_{k+l}})$
to denote $k$ normal arguments and $l$ safe ones.
We always assume that every constructor symbol in $\CS$ has safe
argument positions only.
The argument separations of function symbols are taken into account in labeled graphs in such a way that
for every successor $u$ of a node $v$ 
we write $u \in \nrm (v)$ if $u$ is connected to a normal argument
 position of $\lab_G (v)$, and 
$u \in \safe (v)$ if otherwise.
For two distinct nodes $v_0$ and $v_1$, if
$\lab_G (v_0) = \lab_G (v_1)$, then,
for any $j \in \{ 1, \dots, \arity (\lab_G (v_0)) \}$,
$u_0 \in \nrm (v_0) \Leftrightarrow u_1 \in \nrm (v_1)$
for the $j^{\text{th}}$ successor $u_i$ of $v_i$ ($i = 0, 1$).
Notationally, we write 
$\att_G (v) =
 \sn{v_1, \dots, v_k}{v_{k+1}, \dots, v_{k+l}}$
to express the separation such that
$v_1, \dots, v_k \in \nrm (v)$
and
$v_{k+1}, \dots, v_{k+l} \in \safe (v)$.
We assume that any homomorphism
$\varphi : G \rightarrow H$
preserves argument separations.
Namely, for each $v \in \dom{\lab_G}$, if 
$\att_G (v) = v_1, \dots, v_k$; 
$v_{k+1}, \dots, v_{k+l}$, then
$\att_H (\varphi (v)) = 
 \varphi (v_1), \dots, \varphi (v_k)$;
$\varphi (v_{k+1}), \dots, \varphi (v_{k+l})$.
\end{definition}

Let us recall the idea of safe recursion that the number of recursive
calls is measured only by a normal argument and recursion terms can be
substituted only for safe arguments.
This motivates us to introduce the following
safe version of unfolding graph rewrite rules.

\begin{definition}[Safe recursive unfolding graph rewrite rules]
\label{d:sruf}
\normalfont
We call an unfolding graph rewrite rule {\em safe recursive} if the
 following constraints imposed on the clause \ref{d:uf:f} and
 \ref{d:uf:w} in Definition \ref{d:uf} are satisfied.
\begin{enumerate}
\item In the clause \ref{d:uf:f},
      $v_1 \in \nrm (y)$, and
      in the clause \ref{d:uf:w},
      $v_{j_1}, \dots, v_{j_n} \in \nrm (w_j)$ and
      $w_{j_1}, \dots, w_{j_n} \in \safe (w_j)$.
\item In the clause \ref{d:uf:f} and \ref{d:uf:w},
      for each $j \in \{ 1, \dots, k \}$,
      $x_j \in \nrm (y)$ if and only if
      $x_j \in \nrm (w_i)$ for all $i \in \{ 1, \dots, m \}$.
\end{enumerate}
\end{definition}

As a consequence of Definition \ref{d:sruf}, we have a basic property of
safe recursive unfolding graph rewrite rules, which ensures that
rewriting by any unfolding  graph rewrite rule does not change the structures of
subgraphs in normal argument positions.

\begin{corollary}
\label{c:sruf}
Let $(G, y, w_1)$ be a safe recursive unfolding graph rewrite rule with
the set $V_G$ of vertices consisting of $1+2m+k+l$ elements 
$y$, $v_1, \dots, v_m$, $w_1, \dots, w_m$, $x_1, \dots, x_{k+l}$
specified as in Definition \ref{d:uf} and \ref{d:sruf},
where 
$\att_G (y)$ $= \sn{v_1, x_1, \dots, x_k}{x_{k+1}, \dots, x_{k+l}}$.
Then
$G \seg u \ssubnrm G \seg y $
holds for any $j \in \{ 1, \dots, m \}$
and for any node
$u \in \nrm (w_j)$.
\end{corollary}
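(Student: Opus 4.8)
The plan is to unwind Definitions~\ref{d:uf} and~\ref{d:sruf} to determine exactly which nodes belong to $\nrm(w_j)$, and then, for each such node $u$, to exhibit $u$ as reachable from a normal successor of the left root $y$ while being distinct from $y$; this is the witness that the definition of $\ssubnrm$ asks for.

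First I would compute $\nrm(w_j)$. Fixing $j$ and writing $\att_G(v_j) = v_{j_1}, \dots, v_{j_n}$, clause~\ref{d:uf:w} (with the parameter $k$ of Definition~\ref{d:uf} now instantiated by $k+l$) gives $\att_G(w_j) = v_{j_1}, \dots, v_{j_n}, x_1, \dots, x_{k+l}, w_{j_1}, \dots, w_{j_n}$. The first constraint of Definition~\ref{d:sruf} places $v_{j_1}, \dots, v_{j_n}$ in $\nrm(w_j)$ and $w_{j_1}, \dots, w_{j_n}$ in $\safe(w_j)$; the second constraint, read together with the hypothesis $\att_G(y) = \sn{v_1, x_1, \dots, x_k}{x_{k+1}, \dots, x_{k+l}}$, places $x_1, \dots, x_k$ in $\nrm(w_j)$ and $x_{k+1}, \dots, x_{k+l}$ in $\safe(w_j)$. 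Hence $\nrm(w_j) = \{v_{j_1}, \dots, v_{j_n}\} \cup \{x_1, \dots, x_k\}$, and the claim splits into two cases.

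For a parameter node $u = x_i$ with $1 \le i \le k$: here $x_i$ is a successor of $y$ lying in a normal argument position (by the assumed separation of $\att_G(y)$) and $x_i \neq y$, so $G \seg x_i \ssubnrm G \seg y$ is immediate; note also that $G \seg x_i$ is just the single undefined node $x_i$, since $\lab_G(x_i)$ is undefined. For a recursion node $u = v_{j_i}$: clause~\ref{d:uf:v} forces every successor of $v_j$ into $\{v_1, \dots, v_m\} = V_{G \seg v_1}$, so $v_{j_i}$ is a node of $G \seg v_1$ and hence reachable from $v_1$; since $v_1 \in \nrm(y)$ by the first constraint of Definition~\ref{d:sruf} and $y \notin V_{G \seg v_1}$ (so $v_{j_i} \neq y$), the node $v_{j_i}$ is reachable from a normal successor of $y$ and distinct from $y$, giving $G \seg v_{j_i} \ssubnrm G \seg y$.

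I do not anticipate a real obstacle: once the definitions are unwound the statement is forced. The one subtlety worth flagging is that the edges $v_j \to v_{j_i}$ inside the $v$-block need not themselves be normal --- indeed, if $\Sigma \subseteq \CS$ they are all safe --- so the normal descent witnessing $\ssubnrm$ cannot be sought within that block but must be routed through the unique normal edge $y \to v_1$ at the top, which is precisely why clause~\ref{d:uf:v}, guaranteeing that the entire $v$-block sits under $v_1$, is the crucial ingredient. Everything else is the mechanical bookkeeping of tracking normal versus safe argument positions through the two-layered definition (Definition~\ref{d:uf} as refined by Definition~\ref{d:sruf}).
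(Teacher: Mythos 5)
Your proof is correct and matches the paper's argument: the paper likewise observes that every $u \in \nrm(w_j)$ is either some $v_i$ (hence contained in $G \seg v_1$ by clause \ref{d:uf:v}, with $v_1 \in \nrm(y)$) or some $x_i$ with $i \le k$ (itself a normal successor of $y$), from which $G \seg u \ssubnrm G \seg y$ follows. Your write-up just makes explicit the case analysis and the role of clause \ref{d:uf:v} that the paper states in a single sentence.
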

Corollary \ref{c:sruf} follows from an observation that,
for any $j \in \{ 1, \dots, m \}$
and for any node
$u \in \nrm (w_j)$, either
$u \in \dom{\lab_G}$ and $u = v_i$ for some
$i \in \{ 1, \dots, m \}$,
or
$u \not\in \dom{\lab_G}$ and $u = x_i$ for some
$i \in \{ 1, \dots, k \}$.
This is exemplified by a safe recursive (constructor) unfolding graph
rewrite rule in Figure \ref{fig:srugr} in case
$m=2$ and $l=k=1$ that expresses the term rewrite rule
$\mf (\sn{\ms (\sn{}{\m{0}}), x}{y}) \rightarrow
 \mh (\sn{\m{0}, x}{y, \mg (\sn{x}{y})})$.
To make a contrast, every edge
$\xymatrix{v \ar[r] & u}$ is expressed as
$\xymatrix{v \ar@{-->}[r] & u}$ if
$u \in \safe (v)$ and $\lab_G (v) \in \DS$
and as
$\xymatrix{v \ar@{..>}[r] & u}$ if
$\lab_G (v) \in \CS$.
\begin{figure}[t]
\begin{equation*}
  \xymatrix{*+[o][F-]{\mf} \ar[d] \ar[ddr] \ar@/^/@{-->}[ddrr] & & & \\
            \ms \ar@{..>}[d] & & &
            *+[F]{\mh} \ar@/_/[dlll] \ar@/_/[dll]
                       \ar@{-->}[dl] \ar@{-->}[d] \\
            \m{0} & \bot & \bot &
            \mg \ar@/_/[ll] \ar@{-->}[l]
           }
\end{equation*}
\caption{Example of a safe recursive unfolding graph rewrite rule}
\label{fig:srugr}
\end{figure}
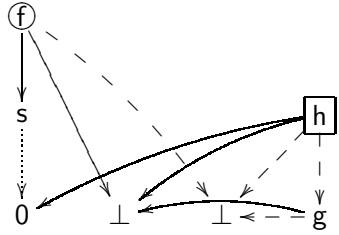

\section{Precedence termination with argument separation}
\label{s:ptas}

Every unfolding graph rewrite rule is {\em precedence terminating} in
the sense of \cite{MidOZ96}.
In this section we propose a restriction of the standard precedence
termination orders, {\em precedence termination with argument separation}. 
To show the polynomial runtime complexity of those GRSs,
we also introduce a non-standard number-theoretic interpretation of GRSs
precedence terminating with argument separation.

Let $\FS = \CS \cup \DS$ be a signature.
A {\em precedence} $\sp$ is a well founded partial binary relation on
$\FS$.
The {\em rank} $\rk : \FS \rightarrow \mathbb N$
is defined to be compatible with $\sp$:
$\rk (g) < \rk (f) \Leftrightarrow g \sp f$.
We always assume that every constructor symbol is $\sp$-minimal. 

\begin{definition}[A restrictive sub-term graph relation
$\ssubnrm$ and precedence termination with argument separation]
\label{d:ptas}
\normalfont
Let $\sp$ be a precedence on a signature $\FS$ and
$G, H \in \mathcal{TG(F)}$.
We write 
$H \ssubnrm G$
if $H \subg G \seg v$ holds for some node
$v \in \nrm (\rootnode_G)$.
The relation $H \spt G$
holds if 
$\lab_H (v) \sp \lab_G (\rootnode_G)$
for any $v \in V_H$ whenever $\lab_H (v)$ is defined, and additionally
one of the following two cases holds.
\begin{enumerate}
\item $H \eqspt G \seg u$ for some successor node
      $u$ of $\rootnode_G$.
\label{d:ptas:1}
\item $\lab_H (\rootnode_H)$ is defined, i.e.
      $\lab_H (\rootnode_H) \sp \lab_G (\rootnode_G)$,
      \begin{itemize}
      \item $H \ssubnrm G$ for each $v \in \nrm (\rootnode_H)$, and
      \item $H \seg v \spt G$ for each $v \in \safe (\rootnode_H)$.
      \end{itemize}
\label{d:ptas:2}
\end{enumerate} 
We say that a GRS $\GS$ over a signature $\FS$ is
{\em precedence terminating with argument separation}
if for some separation of argument positions and for some precedence
$\sp$ on $\FS$, 
$G \seg r \spt G \seg l$ holds for each rule 
      $(G, l, r) \in \GS$ for the relation 
      $\spt$ induced by the precedence $\sp$. 
\end{definition}

Let us recall we always assume that every constructor is minimal in any
precedence.
By the minimality, for any constructor rewrite rule
$(G, l, r) \in \GS$, if
$G \seg r \spt G \seg l$ holds by Case \ref{d:ptas:1} of Definition
\ref{d:ptas},
$G \seg r \subg G \seg v$
holds for some successor node $v$ of $l$.

\begin{definition}[Safe paths and a class 
$\TGnrm$ of terms]
\label{d:safepath}
\normalfont
\
\begin{enumerate}
\item A path $\nseq{v_1, m_1, \dots, v_{k-1}, m_{k-1}, v_k}$
      in a term graph $G$ is called a {\em safe} one if
      $v_{j+1} \in \safe (v_j)$ for all $j \in \{ 1, \dots, k-1 \}$.
Notationally, for a term graph $G$ and two nodes $u, v \in V_G$,
we write $u \in \safepath_G (v)$ if $u$ lies on a safe path
from $v$ in $G$.
We will also use the notation $\safepath_G (v)$ to denote the set of such nodes $u$.
The relation
$u \in \safepath_G (\rootnode_G)$ will be simply written as
$u \in \safepath_G$.
\label{d:safepath:1}
\item Given a signature $\FS = \CS \cup \DS$,
      we define a subset 
      $\mathcal{TG}_{\nrm} (\FS) \subseteq \mathcal{TG(F)}$.
      It holds $G \in \mathcal{TG}_{\nrm} (\FS)$ if
      $G \in \mathcal{TG(C)}$, or
      $G \seg v \in \mathcal{TG(C)}$ for each
      $v \in \nrm (\rootnode_G)$ and
      $G \seg v \in \mathcal{TG}_{\nrm} (\FS)$ for each
      $v \in \safe (\rootnode_G) $.
\label{d:safepath:2}
\end{enumerate}
\end{definition}
By definition, the root $\rootnode_G$ of $G$ lies on the trivial safe path from 
$\rootnode_G$ in $G$.
In the graph rewrite rule
$(G, l, r)$ in Figure \ref{fig:srugr},
visually every safe path consists only of dashed edges
$\xymatrix{\cdot \ar@{-->}[r] & \cdot}$.
Thus, for non-trivial examples, the right bottom $\bot$ lies on a safe path from
$l$, and both the same $\bot$ and $\mg$ lie on 
a safe path from $r$.
The definition of the class $\TGnrm$ yields
$G \in \mathcal{TG}_{\nrm} (\FS)$
for any basic term graph $G \in \mathcal{TG(F)}$.

\begin{lemma}     
\label{l:safepath}
Let $\GS$ be a constructor GRS over a signature $\FS$ and
$G \in \mathcal{TG}_{\nrm} (\FS)$.
\begin{enumerate}
\item  Let
$(H, l, r) \in \GS$ be a rewrite rule and
$\varphi : H \seg l \rightarrow G$
a homomorphism.
Then any path from $\rootnode_G$ to $\varphi (l)$ is a safe path.
\label{l:safepath:1}
\item Suppose additionally that $\GS$ is precedence
 terminating with argument separation.
If $G \rew H$, then
$H \in \mathcal{TG}_{\nrm} (\FS)$.
\label{l:safepath:2}
\end{enumerate} 
\end{lemma}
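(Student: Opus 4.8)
The plan is to prove the two parts in order, using the fact that $\GS$ is a constructor GRS throughout and (for part~\ref{l:safepath:2}) that it is precedence terminating with argument separation.

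For part~\ref{l:safepath:1}, I would argue by induction on the length of a path $\pi$ from $\rootnode_G$ to $\varphi(l)$. Since $(H,l,r)$ is a rule of a constructor GRS, $H\seg l$ is a basic term graph, so $\lab_H(l)=\lab_G(\varphi(l))\in\DS$; hence $\varphi(l)\neq\rootnode_G$ unless the path is trivial — but if $\lab_G(\rootnode_G)\in\DS$ then $G\notin\TG C$, so by definition of $\TGnrm$ we have $G\seg v\in\TG C$ for every $v\in\nrm(\rootnode_G)$, and these subgraphs contain no defined symbols, so $\varphi(l)$ cannot lie in any of them. Therefore the first edge of $\pi$ must go to a safe successor of $\rootnode_G$, i.e.\ the path starts $\rootnode_G,m_1,v_2$ with $v_2\in\safe(\rootnode_G)$. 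Now $G\seg v_2\in\TGnrm(\FS)$ by the definition of $\TGnrm$, and the remaining tail of $\pi$ is a path from $\rootnode_{G\seg v_2}$ to $\varphi(l)$ in $G\seg v_2$; the induction hypothesis applies and gives that this tail is safe, so the whole of $\pi$ is safe.

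For part~\ref{l:safepath:2}, suppose $G\rew H$ via a redex $((K,l,r),\varphi)$, so $H$ is obtained from $G$ by replacing $G\seg\varphi(l)$ with $H\seg\psi(r)$ for the compatible homomorphism $\psi:K\seg r\rightarrow H$. If $G\in\TG C$ there is no redex, so $G\seg v\in\TG C$ for $v\in\nrm(\rootnode_G)$ and $G\seg v\in\TGnrm(\FS)$ for $v\in\safe(\rootnode_G)$. By part~\ref{l:safepath:1}, every path from $\rootnode_G$ to $\varphi(l)$ is safe, so the rewritten node $\varphi(l)$ lies strictly inside one of the safe subgraphs $G\seg v$ with $v\in\safe(\rootnode_G)$ (it is not $\rootnode_G$ itself, since $\lab_G(\rootnode_G)\notin\DS$); the normal subgraphs $G\seg v$, $v\in\nrm(\rootnode_G)$, are therefore untouched and stay in $\TG C$, and $\lab_H(\rootnode_H)=\lab_G(\rootnode_G)$, with the normal/safe separation at the root preserved. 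It thus suffices to show that the modified safe subgraph — call it $G'\rew H'$, the rewrite happening at or below its root — again lies in $\TGnrm(\FS)$; an inner induction on $\depth$ reduces to the base case where $\varphi(l)=\rootnode_{G'}$, i.e.\ $G'$ itself is the redex. There, $H'\seg\psi(r)=H\seg\psi(r)$ and precedence termination with argument separation gives $K\seg r\spt K\seg l$; by Definition~\ref{d:ptas} the normal successors of $\rootnode_{K\seg r}$ satisfy $K\seg r\ssubnrm K\seg l$ while the safe successors satisfy $(K\seg r)\seg v\spt K\seg l$, and I would trace this through $\psi$ and the compatibility with $\varphi$: the normal subgraphs below $\psi(r)$ are (images under $\varphi$ of) subgraphs sitting in normal argument positions of $\varphi(l)$, which by the constructor condition on $K\seg l$ are constructor term graphs, hence in $\TG C$; and the safe subgraphs below $\psi(r)$ are again of the form handled by the statement, so a further induction on the $\spt$-structure (or, packaging it, an induction on rank) closes the case.

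The main obstacle I anticipate is the bookkeeping in part~\ref{l:safepath:2} at the redex: translating the clause-\ref{d:ptas:2} conditions on $G\seg r\spt G\seg l$ of the abstract labeled graph of the rule into statements about the actual subgraphs of $H$ created by $\psi$, while correctly threading the argument-separation-preserving property of $\varphi$ and $\psi$ and invoking the constructor hypothesis that $K\seg l$ is basic (so all of its successor subgraphs are in $\TG C$, and in particular those in normal positions). Once the structure of $H\seg\psi(r)$ is pinned down, checking membership in $\TGnrm(\FS)$ is just unwinding the recursive definition. The nested induction — outer on $\depth(G)$ (or the position of the redex below $\rootnode_G$) and inner on rank for the redex case — should be stated carefully so the two do not circle.
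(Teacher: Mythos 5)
Your overall route coincides with the paper's: Property \ref{l:safepath:1} rests on the observation that constructors have only safe argument positions while $\lab_G(\varphi(l))\in\DS$ (your induction on path length is just the paper's contradiction argument recast), and Property \ref{l:safepath:2} is reduced, via Property \ref{l:safepath:1}, to showing that the contractum lies in $\TGnrm$, proved by recursion along the definition of $\spt$. However, two steps in your treatment of Property \ref{l:safepath:2} are genuinely flawed as written. First, your justification that the normal subgraphs below $\psi(r)$ are constructor term graphs ``by the constructor condition on $K\seg l$'' does not work: basicness of $K\seg l$ only says that the \emph{pattern} subgraphs $K\seg u$, $u$ a successor of $l$, lie in $\mathcal{TG(C)}$; these contain undefined (variable) nodes, and $\varphi$ may instantiate those with arbitrary subgraphs of $G$, possibly containing defined symbols. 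The fact you actually need is that $G\seg\varphi(l)\in\TGnrm$ and $\lab_G(\varphi(l))\in\DS$ (which follows from $G\in\TGnrm$ together with Property \ref{l:safepath:1}, since every path to $\varphi(l)$ is safe), whence $G\seg u\in\mathcal{TG(C)}$ for every $u\in\nrm(\varphi(l))$ by the definition of $\TGnrm$; this is how the paper argues. The needed fact is available in your setup, but it is not what you invoke, and you repeat the same insufficient justification in your closing paragraph.

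Second, your case analysis of $K\seg r\spt K\seg l$ covers only Case \ref{d:ptas:2} of Definition \ref{d:ptas}. Case \ref{d:ptas:1}, namely $K\seg r\eqspt K\seg u$ for some successor $u$ of $l$ (the shape of projection, predecessor and conditional rules), is missing: there $\lab_K(r)$ may be a constructor or even undefined, so your analysis of the normal and safe successors of the contractum root simply does not apply. The paper treats this case separately: by minimality of constructors in the precedence, $K\seg r$ is then a sub-term graph of $K\seg l$, so the contractum coincides with a sub-term graph of $G\seg\varphi(u)$ for a successor $u$ of $l$, and membership in $\TGnrm$ follows from $G\seg\varphi(l)\in\TGnrm$. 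Both repairs are routine, but without them the proof as proposed is incomplete at exactly the points where the hypotheses $G\in\TGnrm$ and ``constructor GRS'' have to be used.
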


\begin{proof}
{\sc Property} \ref{l:safepath:1}.
We show the property by contradiction.
Assume that there exists a path from $\rootnode_G$ to $\varphi (l)$ that
 is not safe.
Then the path passes a normal argument
 position of an intermediate node $v$.
Since constructors have only safe argument positions,
$\lab_G (v)$ must be a defined symbol.
Hence $G \seg \varphi (l) \in \mathcal{TG} (\CS)$
by the definition of the class 
$\mathcal{TG}_{\nrm} (\FS)$.
But
$\lab_G (\varphi (l)) = \lab_{H} (l) \in \DS$
since $\GS$ is a constructor GRS, contradicting
$G \seg \varphi (l) \in \mathcal{TG} (\CS)$.

{\sc Property} \ref{l:safepath:2}.
Suppose that $G$ results in $H$ by applying a redex 
$(R, \varphi)$ for a rule $R = (K, l, r) \in \GS$.
Since any path from $\rootnode_G$ to $\varphi (l)$ is a safe one by Property
 \ref{l:safepath:1}, it suffices to show that 
$H \seg r_H \in \TGnrm$ for the node 
$r_H \in H$ corresponding to $r \in V_{K}$.
We show that $H \seg r_H \in \TGnrm$ holds by induction according to the
 definition of the relation $\spt$.

{\sc Case.} $K \seg r \eqspt K \seg u$
for some successor node $u$ of $l$:
In this case, since $\GS$ is a constructor GRS,
$K \seg r$ is a sub-term graph of $K \seg l$
as noted after Definition \ref{d:ptas}.
Hence
$H \seg r_H \in \TGnrm$
follows from the assumption
$G \seg \varphi (u) \in \TGnrm$.
 
{\sc Case.} Otherwise:
For each $v \in \nrm (r)$, $K \seg r$ is a sub-term graph of
$K \seg u$ for some $u \in \nrm (l)$.
By assumption, $G \seg \varphi (u) \in \mathcal{TG(C)}$
for each $u \in \nrm (l)$, and hence
$H \seg v \in \mathcal{TG(C)}$ also holds
for each $v \in \nrm (r_H)$.
On the other hand,
$K \seg v \spt K \seg l$ for each $v \in \safe (r)$.
The induction hypothesis yields
$H \seg v \in \TGnrm$ for each $v \in \safe (r_H)$.
These allow us to conclude
$H \seg r_H \in \TGnrm$.
\end{proof}

For a finite set
$N = \{ m_i \in \mathbb N \mid i \in I \}$,
let $\sum N$ denote the natural
$\sum_{i \in I} m_i$
with the convention
$\sum \emptyset = 0$.
\begin{definition}[Number-theoretic interpretation of term graphs]
\label{d:pint}
\normalfont
Let $G \in \TGnrm$ be a closed term graph over a finite signature
$\FS = \CS \cup \DS$,
$f = \lab_G (\rootnode_G)$, and
$\sp$ be a precedence on $\FS$.
Then, given a positive natural $\ell$, we define an interpretation 
$\pint{\ell} : \TGnrm \rightarrow \mathbb N$
by
\begin{equation*}
\textstyle
\pint{\ell} (G) = \sum
\left\{
  (1+ \ell)^{2 \cdot \rk (f)} \cdot 
  \left( 1 + \sum_{u \in \nrm (v)} \depth (G \seg u) \right)
  \mid 
  v \in \safepath_G \text{ and }
  G \seg v \not\in \mathcal{TG(C)} 
\right\}.
\end{equation*}
\end{definition}
By definition, $\pint{\ell} (G) = 0$ if $G \in \TG{C}$.
We write $\pj{\ell} (G \seg v)$ to abbreviate the component
$(1+ \ell)^{2 \cdot \rk (f)} \cdot 
 \left( 1 + \sum_{u \in \nrm (v)} \depth (G \seg u) \right)$.

\begin{lemma}[Main lemma]
\label{l:pint}
Let $(G, l, r)$ be a constructor rewrite rule such that
$G \seg r \spt G \seg l$ holds for the relation $\spt$
induced by a precedence $\sp$ on a finite signature $\FS$.
Also let
$G_{\mL}, G_{\mR} \in \TGnrm$ respectively denote closed instances of
$G \seg l$ and $G \seg r$.
If $|G \seg r| \leq \ell$, then
$\pint{\ell} (G_{\mR}) < \pint{\ell} (G_{\mL})$
holds for the interpretation $\pint{\ell}$ induced by the precedence $\sp$.
\end{lemma}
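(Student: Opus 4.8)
The plan is to prove the strict inequality $\pint{\ell}(G_{\mR}) < \pint{\ell}(G_{\mL})$ by structural induction following the definition of $\spt$, exactly paralleling the two cases of Definition~\ref{d:ptas}, and to exploit the fact that the root label of $G\seg l$ is $\sp$-maximal among all labels appearing in $G\seg r$. First I would set up notation: write $f = \lab_{G}(\rootnode_{G\seg l})$, so every defined label $g$ occurring in $G\seg r$ satisfies $g \sp f$, hence $\rk(g) \leq \rk(f)-1$, and therefore each weight factor $(1+\ell)^{2\rk(g)}$ appearing in any summand of $\pint{\ell}(G_{\mR})$ is at most $(1+\ell)^{2\rk(f)-2} = (1+\ell)^{2\rk(f)}/(1+\ell)^{2}$. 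The single summand $\pj{\ell}(G_{\mL})$ contributed by $\rootnode_{G_{\mL}}$ itself (which is present since $G\seg l$ is basic, so $G_{\mL}\notin\TG{C}$) carries the full weight $(1+\ell)^{2\rk(f)}\cdot(1 + \sum_{u\in\nrm(\rootnode)}\depth(G_{\mL}\seg u))$, and the strategy is to show this one term alone dominates the entire sum $\pint{\ell}(G_{\mR})$.

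The key estimate is a counting bound: the number of nodes $v$ on safe paths in $G_{\mR}$ with $G_{\mR}\seg v\notin\TG{C}$ is at most $|G\seg r| \leq \ell$, and for each such $v$, the inner factor $1 + \sum_{u\in\nrm(v)}\depth(G_{\mR}\seg u)$ is bounded by something like $1 + \depth(G_{\mR})\cdot|G\seg r|$ — but here I must be careful, because the crux is that normal arguments of nodes in $G_{\mR}$ are subgraphs strictly below a \emph{normal} argument of $G_{\mL}$. This is where Corollary~\ref{c:sruf} (or rather its abstract analogue, the $\ssubnrm$-clause in Case~\ref{d:ptas:2}) does the real work: by induction along $\spt$, every normal-argument subgraph $G_{\mR}\seg u$ with $u\in\nrm(v)$ for some node $v$ on a safe path satisfies $G_{\mR}\seg u \ssubnrm G_{\mL}$, so its depth is at most $\depth(G_{\mL}\seg u')$ for the relevant normal successor $u'$ of $\rootnode_{G_{\mL}}$, and in particular bounded by $\sum_{u'\in\nrm(\rootnode_{G_{\mL}})}\depth(G_{\mL}\seg u')$. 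Combining: $\pint{\ell}(G_{\mR}) \leq \ell \cdot (1+\ell)^{2\rk(f)-2}\cdot\bigl(1 + |G\seg r|\cdot S\bigr)$ where $S = \sum_{u'\in\nrm(\rootnode_{G_{\mL}})}\depth(G_{\mL}\seg u')$, whereas $\pint{\ell}(G_{\mL}) \geq (1+\ell)^{2\rk(f)}\cdot(1+S)$, and since $\ell\cdot(1+\ell) \cdot(1 + \ell\cdot S) \leq (1+\ell)^{2}\cdot(1+S)$ fails naively, I would tune the exponent gap $(1+\ell)^2$ against the polynomial-in-$\ell$ factor $\ell\cdot(1+|G\seg r|) \leq \ell\cdot(1+\ell)$ — the quadratic jump in the rank exponent is precisely chosen so that $(1+\ell)^2 > \ell(1+\ell)$, giving the strict inequality once the bookkeeping on $S$ is done correctly.

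The induction proper runs as follows. In Case~\ref{d:ptas:1}, $G\seg r \eqspt G\seg u$ for a successor $u$ of $l$; since the rule is a constructor rule, $G\seg r$ is a sub-term graph of $G\seg l$ through a successor, so safe paths and normal-depth contributions in $G_{\mR}$ inject into those of a proper subgraph of $G_{\mL}$, and $\pint{\ell}(G_{\mR}) \leq \pint{\ell}(G_{\mL}\seg\varphi(u)) < \pint{\ell}(G_{\mL})$ because the top summand of $G_{\mL}$ is strictly positive and absent from the subgraph. In Case~\ref{d:ptas:2}, the root of $G\seg r$ has a defined label $g\sp f$; the safe-path nodes of $G_{\mR}$ split into $\rootnode_{G_{\mR}}$, the nodes on safe paths emanating from safe successors (handled by the induction hypothesis applied to each $G\seg v \spt G\seg l$ for $v\in\safe(\rootnode_{G\seg r})$, noting these are again governed by the same inequality with the same right-hand side $\pint{\ell}(G_{\mL})$), and the bound on the normal-argument depths coming from $\ssubnrm$ as above. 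Summing the induction-hypothesis bounds requires that the number of safe successors times the per-term bound still fits under the weight gap, which again reduces to the arithmetic $|G\seg r|\cdot(1+\ell)^{2\rk(f)-2}(\cdots) < (1+\ell)^{2\rk(f)}(\cdots)$. I expect the main obstacle to be the clean propagation of the normal-depth bound through nested safe successors — i.e., verifying that in $G_{\mR}$ every node lying on a safe path has all its normal-argument subgraphs bounded in depth by the normal arguments of $G_{\mL}$, which needs both Lemma~\ref{l:safepath} (to know safe paths are preserved) and the $\ssubnrm$-condition of precedence termination, and then assembling the arithmetic so that the factor-$(1+\ell)^2$ headroom in the rank exponent strictly beats the at-most-$|G\seg r|\le\ell$-fold multiplicity together with the $(1+\ell)$-fold depth blow-up.
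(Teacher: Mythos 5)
There is a genuine gap, and it lies at the heart of your strategy. You announce that the single root summand $\pj{\ell}(G_{\mL})$ should dominate \emph{the entire} sum $\pint{\ell}(G_{\mR})$, justified by the observation that every defined label in $G \seg r$ is $\sp$-smaller than $f = \lab_G(l)$. But that observation is only true of the \emph{rule graph} $G \seg r$, not of its closed instance $G_{\mR}$: the undefined nodes of $G \seg r$ are instantiated by arbitrary closed term graphs in $\TGnrm$ (sitting in safe argument positions), and these may contain defined symbols of rank $\geq \rk(f)$. Their summands carry weights $(1+\ell)^{2\rk(g)}$ with $\rk(g)$ possibly larger than $\rk(f)$, so no amount of tuning the $(1+\ell)^{2}$ headroom against the multiplicity $|G\seg r| \leq \ell$ can make the root summand of $G_{\mL}$ absorb them. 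For the same reason your counting bound ``the number of safe-path nodes $v$ of $G_{\mR}$ with $G_{\mR}\seg v \notin \TG{C}$ is at most $|G \seg r| \leq \ell$'' is false for the instance: it holds only for the nodes inherited from $G \seg r$, not for those created by the instantiation. The paper's proof is precisely organized around this distinction: it partitions $\safepath_{G_{\mR}}$ into the nodes coming from labeled nodes of $G \seg r$ (call them $V_{\m{r}}$) and the rest. On $V_{\m{r}}$ it performs exactly the rank-gap arithmetic you sketch (each summand $\leq (1+\ell)^{2\rk(f)-1}(1+S)$, at most $\ell$ of them, hence strictly below $\pj{\ell}(G_{\mL}) = (1+\ell)^{2\rk(f)}(1+S)$). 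The remaining nodes are handled by a separate claim with no rank argument at all: since every undefined node of $G \seg r$ occurs in $G \seg l$, these nodes also occur in $G_{\mL}$, and using $G_{\mL} \in \TGnrm$ one shows they lie on safe paths of $G_{\mL}$ and are distinct from its root, so their summands are matched one-to-one against the non-root summands of $\pint{\ell}(G_{\mL})$. Your proposal has no counterpart of this second half, and without it the claimed domination is simply false in general.

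A secondary problem is the inductive scheme you propose for Case 2 of the definition of $\spt$: applying the induction hypothesis to each safe successor gives, for each of them, a bound strictly below $\pint{\ell}(G_{\mL})$, and summing several such bounds (plus the root contribution) cannot yield something $< \pint{\ell}(G_{\mL})$; the rank-gap factor only discounts the summands whose labels come from $G \seg r$, not the instantiated parts that each IH bound silently includes. The paper avoids this by not inducting on $\spt$ for the main estimate at all: it bounds the sum $\pint{\ell}(G_{\mR})$ directly, node by node, using the definition of $\spt$ only to see that normal arguments of rule nodes are sub-term graphs of normal arguments of $G_{\mL}$ (your $\ssubnrm$ observation, which is correct and does align with the paper's first claim).
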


\begin{proof}
We estimate an upper bound for
$\pint{\ell} (G_{\mR}) = \sum 
 \{ \pj{\ell} (G \seg v) \mid v \in \safepath_{G_{\mR}}
    \text{ and } G \seg v \not\in \TG{C}
 \}$
dividing the domain
$\{ v \in V_{G_{\mR}} \mid \safepath_{G_{\mR}} 
    \text{ and } G \seg v \not\in \TG{C}
 \}$
into two parts.
Let
$V_{\m{l}} \subseteq V_{G_{\mL}}$
denote the set of labeled nodes that already occur in $G \seg l$.
More precisely, if 
$G_{\mL}$ is the result of instantiation by a homomorphism $\varphi$
from $G \seg l$ to an underlying term graph,
$V_{\m{l}} =
 \{ v \in V_{G_{\mL}} \mid
    \exists u \in V_{G \Seg l} \cap \dom{\lab_G} \text{ s.t. }
    v = \varphi (u)
 \}$.
In other words,
$V_{G_{\mL}} \setminus V_{\m{l}}$
is the set of nodes that are newly added by the instantiation.
Let $V_{\m{r}}$ denote the corresponding subset of
$V_{G_{\mR}}$.
Since every undefined node in $G \seg r$ occurs in
$G \seg l$ as a general assumption, 
$V_{G_{\mR}} \setminus V_{\m{r}} \subseteq
 V_{G_{\mL}} \setminus V_{\m{l}}$
holds.
Write $f$ to denote $\lab_{G} (l)$.

\begin{claim}
\label{c:pj}
$
\pj{\ell} (G_{\mR} \seg v) \leq
(1+ \ell)^{2 \cdot \rk (f) -1} \cdot 
\left( 1 + \sum_{u \in \nrm (\rootnode_{G_{\mL}})} \depth (G_{\mL} \seg u) 
\right)
$
holds for any
$v \in \safepath_{G_{\mR}} \cap V_{\m{r}}$.
\end{claim}

Write $g$ to denote $\lab_{G_{\mR}} (v)$, which is defined by definition
 of $V_{\m{r}}$.
By the assumption $G \seg r \spt G \seg l$,
$g \sp f$ for the given precedence $<$, and hence
$\rk (g) < \rk (f)$ holds.
By Definition \ref{d:ptas}, for each
$v' \in \nrm (v)$, $G_{\mR} \seg v'$ is a sub-term graph of 
$G_{\mL} \seg u$ for some $u \in \nrm (\rootnode_{G_{\mL}})$.
Hence, for each $v' \in \nrm (v)$,
$\depth (G_{\mR} \seg v') \leq
 \sum_{u \in \nrm (\rootnode_{G_{\mL}})} \depth (G_{\mL} \seg u)$,
i.e.,
$
\textstyle
1+ \sum_{u \in \nrm (v)} \depth (G_{\mR} \seg u)
\leq
\textstyle
1+ \arity (g) \cdot
\left( \sum_{u \in \nrm (\rootnode_{G_{\mL}})} \depth (G_{\mL} \seg u)
\right)
\leq
\textstyle
(1+ \ell) \cdot
\left(1+ \sum_{u \in \nrm (\rootnode_{G_{\mL}})} \depth (G_{\mL} \seg u)
\right)
$.
Letting
$v \in \safepath_{G_{\mR}} \cap V_{\m{r}}$,
this allow us to reason as follows.
\begin{eqnarray*}
\pj{\ell} (G_{\mR} \seg v) &\leq&
\textstyle
(1+ \ell)^{2 \cdot \rk (f) -2} \cdot 
\left( 1 + \sum_{u \in \nrm (v)} \depth (G_{\mR} \seg u)
\right)
\quad (\text{since } \rk (g) \leq \rk (f) -1) \\
&\leq&
\textstyle
(1+ \ell)^{2 \cdot \rk (f) -1} \cdot 
\left( 1 + \sum_{u \in \nrm (\rootnode_{G_{\mL}})}
       \depth (G_{\mL} \seg u)
\right)
\end{eqnarray*}
Since
$|\safepath_{G_{\mR}} \cap V_{\m{r}}| \leq |G \seg r| \leq \ell$,
Claim \ref{c:pj} allows us to reason as follows.
\begin{eqnarray}
&&
\textstyle
\sum \left\{ \pj{\ell} (G_{\mR} \seg v) \mid
             v \in \safepath_{G_{\mR}} \cap V_{\m{r}}
             \text{ and } G_{\mR} \seg v \not\in \TG{C}
     \right\}
\nonumber \\
&\leq&
\textstyle
\ell \cdot
(1+ \ell)^{2 \cdot \rk (f) -1} \cdot 
\left( 1 + \sum_{u \in \nrm (\rootnode_{G_{\mL}})}
       \depth (G_{\mL} \seg u)
\right)
\qquad (\text{by Claim \ref{c:pj}})
\nonumber \\
&<&
\textstyle
(1+ \ell)^{2 \cdot \rk (f)} \cdot 
\left( 1 + \sum_{u \in \nrm (\rootnode_{G_{\mL}})}
       \depth (G_{\mL} \seg u)
\right)
= \pj{\ell} (G_{\mL})
\label{e:V_r}
\end{eqnarray}

\begin{claim}
\label{c:V_l}
If
$v \in \safepath_{G_{\mR}} \setminus V_{\m{r}}$
and
$G_{\mR} \seg v \not\in \TG{C}$,    
then
$v \in \safepath_{G_{\mL}} \setminus V_{\m{l}}$
and
$G_{\mL} \seg v \not\in \TG{C}$.
\end{claim}

Suppose
$v \in \safepath_{G_{\mR}} \setminus V_{\m{r}}$
and,
$G_{\mR} \seg v \not\in \TG{C}$.
Then
$G_{\mL} \seg v \not\in \TG{C}$
since
$G_{\mR} \seg v \subg G_{\mL} \seg v$
holds as mentioned above.
We show 
$v \in \safepath_{G_{\mL}} \setminus V_{\m{l}}$
by contradiction.
So assume
$v \not\in \safepath_{G_{\mL}} \setminus V_{\m{l}}$.
Then, since $G_{\mL} \in \TGnrm$,
$G_{\mL} \seg v \in \TG{C}$
holds as observed in the proof of 
Lemma \ref{l:safepath}.\ref{l:safepath:1}.
Claim \ref{c:V_l} allows us to reason as follows.
\begin{eqnarray}
&&
\textstyle
\sum \left\{ \pj{\ell} (G_{\mR} \seg v) \mid
             v \in \safepath_{G_{\mR}} \setminus V_{\m{r}}
             \text{ and } G_{\mR} \seg v \not\in \TG{C}
     \right\}
\nonumber \\
&\leq&
\textstyle
\sum \left\{ \pj{\ell} (G_{\mL} \seg v) \mid
             v \in \safepath_{G_{\mL}} \setminus V_{\m{l}}
             \text{ and } G_{\mL} \seg v \not\in \TG{C}
     \right\}
\qquad (\text{by Claim \ref{c:V_l}})
\nonumber \\
&\leq&
\textstyle
\sum \left\{ \pj{\ell} (G_{\mL} \seg v) \mid
             v \in \safepath_{G_{\mL}} \setminus \{ \rootnode_{G_{\mL}} \}
             \text{ and } G_{\mL} \seg v \not\in \TG{C}
     \right\}
\label{e:V_l}
\end{eqnarray}
Combining the inequalities \eqref{e:V_r} and \eqref{e:V_l},
we conclude
$\pint{\ell} (G_{\mR}) < \pint{\ell} (G_{\mL})$.
\end{proof}

The next lemma states that the {\em normal part} of a starting basic
term graph does not change under precedence termination with argument
separation.
 
\begin{lemma}
\label{l:basic}
Let $\GS$ be a constructor GRS over a signature $\FS$ that is
 precedence-terminating with argument separation 
and
$G_0 \in \mathcal{TG(F)}$
a closed basic term graph.
If $G_0 \rewast G$, then
$G \seg u \ssubnrm G_0$
holds for any nodes
$v \in \safepath_G$ and 
$u \in \nrm (v)$.
\end{lemma}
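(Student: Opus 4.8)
\emph{Proof plan.} I would argue by induction on the number $n$ of steps in the reduction $G_0 \rewast G$, keeping Lemma~\ref{l:safepath} at hand. Observe first that a basic term graph lies in $\TGnrm$, so $G_0 \in \TGnrm$, and iterating Lemma~\ref{l:safepath}.\ref{l:safepath:2} shows that every $G$ with $G_0 \rewast G$ is in $\TGnrm$ too. The base case $n = 0$ is immediate from basicness: $\lab_{G_0}(\rootnode_{G_0}) \in \DS$ and $G_0 \seg w \in \TG{C}$ for every successor $w$ of $\rootnode_{G_0}$, so any node on a nontrivial safe path from $\rootnode_{G_0}$ sits inside a constructor sub-term graph and hence has no normal argument positions; thus the only $v \in \safepath_{G_0}$ with $\nrm(v) \neq \emptyset$ is $\rootnode_{G_0}$ itself, and for $u \in \nrm(\rootnode_{G_0})$ we have $G_0 \seg u \subg G_0 \seg u$, i.e.\ $G_0 \seg u \ssubnrm G_0$.

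For the inductive step write $G_0 \rewm{n} G' \rew G$, the last step being induced by a redex $((K, l, r), \varphi)$ with a homomorphism $\psi : K \seg r \rightarrow G$ compatible with $\varphi$, so that $G$ results from $G'$ by replacing $G' \seg \varphi(l)$ by $G \seg \psi(r)$; write $G_{\mL}$ and $G_{\mR}$ for these two term graphs, which are closed instances of $K \seg l$ and $K \seg r$. I would record two observations. (i) By Lemma~\ref{l:safepath}.\ref{l:safepath:1} every path from $\rootnode_{G'}$ to $\varphi(l)$ is safe, so $\varphi(l) \in \safepath_{G'}$, and the induction hypothesis applied to $G'$ gives $G' \seg u' \ssubnrm G_0$ for all $u' \in \nrm(w)$ with $w \in \safepath_{G'}$. (ii) Since redirection only retargets the edges pointing to $\varphi(l)$ onto the fresh node $\psi(r)$ and garbage collection only removes nodes, a safe path of $G$ is either already a safe path of $G'$ (when all of its nodes are old) or decomposes as a safe path of $G'$ down to $\varphi(l)$, followed by $\psi(r)$, followed by a safe path inside $G_{\mR}$.

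Now fix $v \in \safepath_G$ and $u \in \nrm(v)$ and split along (ii). If $v$ lies on a safe path of $G'$, then $\varphi(l)$ does not occur in $G' \seg u$ --- the safe path to $\varphi(l)$ never enters a normal position by (i) --- so $G \seg u$ is unchanged from $G' \seg u$ and the induction hypothesis applies. Otherwise $v \in \safepath_{G_{\mR}}$, and tracing that safe path back through $K \seg r$ along safe argument positions, either $v$ lies in the substitution part of $G_{\mR}$, in which case $G \seg v$ is a sub-term graph of $G_{\mL}$ sitting at a normal or a safe argument position --- at a normal position $G \seg v \in \TG{C}$ (so the claim is vacuous, constructors having no normal arguments) and at a safe position $G \seg v \cong G' \seg v$ lies on a safe path of $G'$ (so the induction hypothesis applies) --- or $v = \psi(w)$ for a defined node $w$ of $K \seg r$ reachable from $r$ by safe positions. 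In the last case, iterating Case~\ref{d:ptas:2} of Definition~\ref{d:ptas} along the safe path gives $K \seg w \spt K \seg l$, and Case~\ref{d:ptas:2} once more yields $K \seg w' \ssubnrm K \seg l$, i.e.\ $K \seg w' \subg K \seg u''$ for some $u'' \in \nrm(l)$, for each normal successor $w'$ of $w$; instantiating, $G \seg u \subg G_{\mL} \seg \varphi(u'')$ with $\varphi(u'') \in \nrm(\varphi(l))$, and since $\varphi(l) \in \safepath_{G'}$ the induction hypothesis gives $G' \seg \varphi(u'') \ssubnrm G_0$, whence $G \seg u \ssubnrm G_0$ by transitivity of $\subg$. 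If instead $K \seg w$ falls under Case~\ref{d:ptas:1}, then, $\GS$ being a constructor GRS, $K \seg w$ is a sub-term graph of $K \seg l$, and this subcase reduces to the analysis of sub-term graphs of $G_{\mL}$ just carried out.

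The step I expect to be the main obstacle is the node bookkeeping across the build/redirection/garbage-collection phases underlying (ii): making rigorous that a safe path of $G$ factors through $\psi(r)$, that $G \seg u$ is genuinely left untouched whenever $u$ hangs below a node on a safe path, and that substitution-part subgraphs of $G_{\mR}$ coincide with subgraphs of $G'$ --- and keeping this synchronized with the Case~\ref{d:ptas:1}/\ref{d:ptas:2} distinction. Once this is in place the combinatorics is of the same flavour as in the proofs of Lemma~\ref{l:safepath} and Lemma~\ref{l:pint}.
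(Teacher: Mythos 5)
Your plan follows essentially the same route as the paper's own proof: induction on the number of rewrite steps, Lemma~\ref{l:safepath} to keep every reduct in $\TGnrm$ and to place $\varphi(l)$ on a safe path, a case split between nodes surviving from the previous graph and nodes of the instantiated right-hand side, and iteration of Case~\ref{d:ptas:2} of Definition~\ref{d:ptas} along safe paths to push normal successors under normal argument positions of the redex before invoking the induction hypothesis. The one point you gloss --- an old node of the substitution part reached by a safe path of $G$ need not lie on a safe path of $G'$, but in that event $G' \in \TGnrm$ forces its sub-term graph into $\TG{C}$, so $\nrm(v) = \emptyset$ and the claim is vacuous (equivalently, the paper routes this through $H \seg v \ssubnrm G \seg v_G$ for some $v_G \in \safepath_G$) --- is exactly the bookkeeping the paper supplies, so your flagged obstacles are resolvable along the same lines.
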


\begin{proof}
Suppose $G_0 \rewm{n} G$.
We show the assertion by induction on $n \geq 0$.
In the base case $n=0$, $G = G_0$ and 
$\nrm (v) = \emptyset$ for any 
$v \in \safepath_G \setminus \{ \rootnode_G \}$
since $G_0$ is basic.
Hence the assertion trivially holds.

For the induction step, suppose that
$G_0 \rewm{n} G$ holds and that
$G \rew H$ is induced by a redex 
$(R, \varphi)$ in $H$ for a rewrite rule 
$R = (K, l, r) \in \GS$ and a homomorphism
$\varphi : K \seg l \rightarrow G$.
Then 
$G, H \in \TGnrm$
by Lemma \ref{l:safepath}.\ref{l:safepath:2}.
First let us consider the case
$\varphi (l) = \rootnode_G$.
By induction hypothesis,
it suffices to show that for any nodes
$v_H \in \safepath_H$ and
$u_H \in \nrm (v_H)$
there exists a node
$v_G \in \safepath_G$ such that
$H \seg u_H \ssubnrm G \seg v_G$
holds.
Let $v_H \in \safepath_H$ and
$u_H \in \nrm (v_H)$.

{\sc Case.}
$K \seg r \eqspt K \seg u$ for some successor node $u$ of $l$:
Since $\GS$ is a constructor GRS,
$K \seg r \subg K \seg u$, and hence
$H \subg G$ holds.
If $v_H \in \safepath_G$, then we can let
$v_G = v_H$.
If $v_H \not\in \safepath_G$, then 
any path from $\rootnode_G$ to $v_H$ passes an normal argument position of a node
$v_G \in \safepath_G$.
This means
$H \seg v_H \ssubnrm G \seg v_G$, and thus
$H \seg u_G \ssubnrm G \seg v_G$.

{\sc Case.} Otherwise: 
If
$v \in V_G \setminus \{ \varphi (u) \mid u \in V_{G \Seg l }\}$,
then,
as in the previous case, one can find a node
$v_G \in \safepath_G$ such that
$H \seg v_H \ssubnrm G \seg v_G$.
Thus we assume that $v_H$ is mapped from $V_{K \Seg r}$ by $\varphi$.
Then $K \seg r \spt K \seg l$ yields
$H \seg v_H \spt G$.
By the definition of the relation $\spt$,
$H \seg u_H \ssubnrm G \seg \varphi (l)$
holds.
Since $\varphi (l) \in \safepath_G$ by Lemma
\ref{l:safepath}.\ref{l:safepath:1},
we can let $v_G = \varphi (l)$.

Now consider the case
$\varphi (l) \neq \rootnode_H$.
Let
$r_H \in H$ denote the node corresponding to $r \in K$.
Let us consider the subcase $v_H \in V_{H \Seg r_H}$.
In this subcase,
since
$v_H \in \safepath_H (r_H)$,
as in the case
$\varphi (l) = \rootnode_G$,
there exists a node $v_G \in \safepath_G$ such that
$H \seg u_H \ssubnrm G \Seg v_G$
holds.
Since $\varphi (l) \in \safepath_G$ by Lemma
\ref{l:safepath}.\ref{l:safepath:1},
the induction hypothesis yields
$H \seg u_G \ssubnrm G_0$.
Consider the subcase $v_H \not\in V_{H \Seg r_H}$.
In this subcase, $v_H \in \safepath_G$.
As in the previous subcase,
$V_{H \Seg u_H} \cap V_{H \Seg r_H}
 \subseteq
 V_{G \Seg u_H} \cap V_{G \Seg \varphi (l)}$
holds.
On the other side
$V_{H \Seg u_H} \setminus V_{H \Seg r_H} \subseteq
 V_{G \Seg u_H} \setminus V_{G \Seg \varphi (l)}$
holds.
Combining the two inclusions, we
 reason as
\begin{eqnarray*}
\textstyle
 V_{H \Seg u_H}
 =
\textstyle
 \left( V_{H \Seg u_H} \cap V_{H \Seg r_H}
 \right)
 \cup
 \left( V_{H \Seg u_H} \setminus V_{H \Seg r_H}
 \right)
 \subseteq
\textstyle
 \left( V_{G \Seg u_H} \cap V_{G \Seg \varphi (l)}
 \right)
 \cup
 \left( V_{G \Seg u_H} \setminus V_{G \Seg \varphi (l)}
  \right)
 \subseteq
\textstyle
 V_{G \Seg u_H}.
\end{eqnarray*}
This implies
$H \seg u_H \subg G \seg u_H$.
Since $v_H \in \safepath_G$ and
$u_H \in \nrm (v_H)$,
the induction hypothesis yields
$G \seg u_H \ssubnrm G_0$,
and thus
$H \Seg u_H \ssubnrm G_0$.
\end{proof}

To express that a term graph $G$ is maximally shared
{\em with respect to normal argument positions}
of the root $\rootnode_G$,
we define a term graph
$G \cap \nrm$
consisting only of sub-term graphs connecting to 
normal argument positions of $\rootnode_G$.
If $G$ represents a term
$f(\sn{t_1, \dots, t_k}{t_{k+1}, \dots, t_{k+l}})$,
then $G \cap \nrm$ represents the term
$f(\sn{t_1, \dots, t_k}{x_1, \dots, x_l})$
with $l$ fresh variables $x_1, \dots, x_l$.
\begin{definition}
Let $G \in \mathcal{TG(F)}$ be a term graph with
$\att_G (\rootnode_G) =
 \sn{v_1, \dots, v_k}{v_{k+1}, \dots, v_{k+l}}$.
If $\lab_G (v)$ is not defined for any $v \in \safe (\rootnode_G)$,
then $G \cap \nrm$ simply denotes $G$.
Otherwise, $G \cap \nrm$ is defined from $l$ distinct nodes
$u_1, \dots, u_l$ not contained in $V_G$ as follows.

\noindent
$\begin{array}{rcl}
 V_{G \cap \nrm} & = & \textstyle
                    \{ \rootnode_G \} \cup
                   \left( \bigcup_{v \in \nrm (\rootnode_G)} V_{G \Seg v}
                   \right)
                   \cup \{ u_1, \dots, u_l \} \\
 E_{G \cap \nrm} & = & \textstyle
                   \left\{ (u, v) \in E_{G} \mid
                       u, v \in \{ \rootnode_G \} \cup
                   \left( \bigcup_{v \in \nrm (\rootnode_G)} V_{G \Seg v}
                   \right)
                   \right\} \cup
                   \{ (\rootnode_G, u_j) \mid j = 1, \dots, l \} \\
 \lab_{G \cap \nrm} (v) & = &
 \left\{ \begin{array}{ll} 
         \lab_G (v) & \text{if } v \in V_G, \\
         \text{not defined} & \text{otherwise.}
         \end{array}
 \right. \\
 \rootnode_{G \cap \nrm} & = & \rootnode_G
 \end{array}
$

\noindent
By definition,
$\att_{G \cap \nrm} (\rootnode_{G \cap \nrm}) =
 \sn{v_1, \dots, v_k}{u_1, \dots, u_l}$.
A choice of nodes $u_1, \dots, u_l$ is not important and hence will be
 always omitted in later discussions.
\end{definition}

Since an underlying signature 
$\FS = \CS \cup \DS$
is finite, for any (infinite) constructor GRS $\GS$ over $\FS$,
the defined symbols $\DS$ can be partitioned into two sets
$\Dinf$ and $\Dfin$ so that every symbol
$f \in \Dinf$ is defined by an infinite number of rules
whereas every symbol
$f \in \Dfin$ is defined by a finite number of rules.
Accordingly, we define a partition of every constructor GRS $\GS$ into two
sets $\Ginf$ and $\Gfin$
by
$\Ginf = \{ (G, l, r) \in \GS \mid \lab_G (l) \in \Dinf \}$
and
$\Gfin = \{ (G, l, r) \in \GS \mid \lab_G (l) \in \Dfin \}$.
\begin{theorem}
\label{t:context}
Let $\GS$ be a constructor GRS over a finite signature $\FS$ that is
 precedence terminating with argument separation and let
$\max \left( \{ \arity (f) \mid f \in \FS \} \cup
             \{ |K \seg r| \mid \exists l \ (K, l, r) \in \Gfin \}
      \right) \leq d$.
Suppose that, for any rule $(K, l, r) \in \Ginf$, 
{\rm (i)} $(K \seg l) \cap \nrm$ is maximally shared,
{\rm (ii)} $K \seg v$ is closed for every
      $v \in \nrm (l) \cap \dom{\lab_K}$,
{\rm (iii)} $|\{ v \in \nrm (l) \mid v \not\in \dom{\lab_K} \}
              \cup
              (\bigcup_{v \in \safe (l)} V_{K \Seg v})
             | \leq d$, and
{\rm (iv)} $|K \seg r| \leq |K \seg l| + 
       |\bigcup_{v \in \nrm (r)} V_{K \Seg v}|$.
Then, for any closed basic term graph
$G_0 \in \mathcal{TG(F)}$,
if $G_0 \rewast G$, $G \rew H$ and
$2 \cdot
 ( |\bigcup_{v \in \nrm (\rootnode_{G_0})} V_{G_0 \Seg v}| +d )
 \leq \ell
$,
then 
$\pint{\ell} (H) < \pint{\ell} (G)$
holds.
\end{theorem}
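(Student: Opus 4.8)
The plan is to fix the redex $(R,\varphi)$ with $R=(K,l,r)\in\GS$ that induces $G\rew H$, together with a compatible homomorphism $\psi\colon K\seg r\rightarrow H$, and to derive the inequality by isolating the contribution of the contracted subgraph and applying the Main Lemma (Lemma~\ref{l:pint}) there. Write $N:=|\bigcup_{v\in\nrm(\rootnode_{G_0})}V_{G_0\Seg v}|$, so the hypothesis reads $2(N+d)\leq\ell$. First I would record the membership facts that make the interpretation and the Main Lemma applicable: $G_0$ basic gives $G_0\in\TGnrm$, and iterating Lemma~\ref{l:safepath}.\ref{l:safepath:2} along $G_0\rewast G\rew H$ gives $G,H\in\TGnrm$ (and both stay closed, since rewriting introduces no undefined node); by Lemma~\ref{l:safepath}.\ref{l:safepath:1} any path from $\rootnode_G$ to $\varphi(l)$ is safe, so $\varphi(l)\in\safepath_G$, and unfolding the definition of $\TGnrm$ down that safe path yields $G\seg\varphi(l)\in\TGnrm$; symmetrically $\psi(r)\in\safepath_H$ and $H\seg\psi(r)\in\TGnrm$. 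Hence $G\seg\varphi(l)$ and $H\seg\psi(r)$ are closed instances of $K\seg l$ and $K\seg r$ lying in $\TGnrm$, so as soon as we know $|K\seg r|\leq\ell$, Lemma~\ref{l:pint} gives $\pint{\ell}(H\seg\psi(r))<\pint{\ell}(G\seg\varphi(l))$.

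Second, I would reduce $\pint{\ell}(H)<\pint{\ell}(G)$ to that local inequality. The point is that contracting the redex leaves the ``context'' untouched: since $\lab_G(\varphi(l))\in\DS$ while, by $G\in\TGnrm$ and Lemma~\ref{l:safepath}.\ref{l:safepath:1}, every normal subgraph hanging off a node on a safe path from $\rootnode_G$ lies in $\TG{C}$, the node $\varphi(l)$ occurs in no such constructor subgraph; hence for every $v\in\safepath_G\setminus V_{G\seg\varphi(l)}$ the label of $v$ and the depths of the normal successors of $v$ agree in $G$ and in $H$, where Lemma~\ref{l:basic} guarantees that the normal subgraphs themselves are unchanged. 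Therefore $\pint{\ell}(G)=C+\pint{\ell}(G\seg\varphi(l))$ and $\pint{\ell}(H)=C+\pint{\ell}(H\seg\psi(r))$ for a common context value $C$ (the $\pj{\ell}$-sum over the safe-path nodes outside the contracted subgraph), and the local inequality finishes this part; the mild delicacy here is to treat nodes that sharing places both inside $V_{G\seg\varphi(l)}$ and on a safe path bypassing $\varphi(l)$, which one handles by assigning every node of $V_{G\seg\varphi(l)}$ on a safe path to the redex part.

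Third, it remains to check $|K\seg r|\leq\ell$; this is the step that uses hypotheses (i)--(iv), and I expect it to be the main obstacle. For $R\in\Gfin$ it is immediate from the choice of $d$: $|K\seg r|\leq d\leq 2(N+d)\leq\ell$. For $R\in\Ginf$ I would argue as follows. By (iv), $|K\seg r|\leq|K\seg l|+|\bigcup_{v\in\nrm(r)}V_{K\Seg v}|$; split $V_{K\seg l}$ into its root, the labeled normal part $\bigcup_{v\in\nrm(l)\cap\dom{\lab_K}}V_{K\Seg v}$, and the remaining nodes $\{v\in\nrm(l)\mid v\notin\dom{\lab_K}\}\cup\bigcup_{v\in\safe(l)}V_{K\Seg v}$, the last bounded by $d$ via (iii). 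For the labeled normal part, (i) (maximal sharing of $(K\seg l)\cap\nrm$) together with (ii) (those subgraphs are closed) forces $\varphi$ to be injective on it --- two such nodes with a common $\varphi$-image represent the same ground term and hence coincide --- while Lemma~\ref{l:safepath}.\ref{l:safepath:1} and Lemma~\ref{l:basic} place its $\varphi$-image inside $\bigcup_{w\in\nrm(\rootnode_{G_0})}V_{G_0\Seg w}$, so it has at most $N$ nodes. The term $\bigcup_{v\in\nrm(r)}V_{K\Seg v}$ is bounded the same way after observing, via Definition~\ref{d:ptas}(Case~\ref{d:ptas:2}), that each $K\seg v$ with $v\in\nrm(r)$ is a sub-term graph of some $K\seg u$ with $u\in\nrm(l)$. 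Collecting these estimates (absorbing the root and the finitely many undefined normal successors into the slack $d\geq 1$) yields $|K\seg r|\leq 2(N+d)\leq\ell$, which completes the proof.
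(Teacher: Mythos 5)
Your proposal is correct and follows essentially the same route as the paper: bound $|K \seg r| \leq 2(|\Vnrm{G_0}|+d) \leq \ell$ via the $\Gfin$/$\Ginf$ split, injectivity of $\varphi$ on the labeled normal part from (i)--(ii), the cardinality bound (iii), condition (iv), and Lemmas \ref{l:safepath} and \ref{l:basic}, then apply Lemma \ref{l:pint} at the redex and split $\pint{\ell}$ into redex and context parts. The only slight overstatement is the claim of a \emph{common} context value $C$: because of garbage collection one only has $\safepath_H \setminus \safepath_H (r_H) \subseteq \safepath_G \setminus \safepath_G (\varphi (l))$, so the context contribution in $H$ is merely bounded by that in $G$ (which is all the argument needs, and is exactly the inequality the paper uses).
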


We write $(V_G)_{\nrm}$ to abbreviate the set
$\bigcup_{u \in \nrm (\rootnode_G)} V_{G \Seg u}$.
The conditions (i) and (ii) ensure that one step of rewriting can only
reduce a constant number of nodes in
$\Vnrm{K \Seg l}$ by sharing.
The condition (iii) ensures the same for nodes in
$V_{K \Seg l} \setminus \Vnrm{K \Seg l}$.
Since the condition (iv) implies 
$|K \seg r| \leq 2 \cdot |K \seg l|$,
the condition expresses that not only $K \seg l$ but
$K \seg r$ is also suitably shared.

\begin{proof} 
Given a closed basic term graph
$G_0 \in \mathcal{TG(F)}$,
suppose that $G_0 \rewast G$ and that
$G \rew H$ is induced by a redex $(R, \varphi)$
in $G$ for a rule $R = (K, l, r)$ and a homomorphism
$\varphi: K \seg l \rightarrow G$.
Then $G, H \in \mathcal{TG}_{\nrm} (\FS)$
holds by Lemma \ref{l:safepath}.\ref{l:safepath:2}.
Let
$2 \cdot \left( |\Vnrm{G_0}| +d \right) \leq \ell$.
We show that
$|K \seg r| \leq 2 \cdot
 \left( |\Vnrm{G \Seg \varphi (l)}| + d
 \right)
$
holds.
In case $(K, l, r) \in \Gfin$,
$|K \seg r| \leq d$ holds by assumption.
In case $(K, l, r) \in \Ginf$
we deduce the following two inequalities.
\begin{eqnarray}
|K \seg l| \leq |\Vnrm{G \Seg \varphi (l)}| + d
\label{e:context:1} \\
|\Vnrm{K \Seg r}| \leq |\Vnrm{G \Seg \varphi (l)}| +d
\label{e:context:2}
\end{eqnarray}
The homomorphism
$\varphi$ is injective over
$\Vnrm{K \Seg l} \cap \dom{\lab_K}$
by maximal sharing of $(K \seg l) \cap \nrm$.
Hence 
$|K \seg l| \leq |\Vnrm{G \Seg \varphi (l)}| + d$
holds by the assumptions (ii) and (iii).

We deduce the inequality \eqref{e:context:2} by case analysis.
In case that
$K \seg r \eqspt K \seg u$ for some successor node $u$ of $l$,
$K \seg r \in \TG{C}$, and hence
$|\Vnrm{K \Seg r}| = |\emptyset| =0$ 
since constructors only have safe argument positions.
Otherwise,
for every $v \in \nrm (r)$, $K \seg v$ is a sub-term graph of
$K \seg u$ for some $u \in \nrm (l)$.
Thus
$|\Vnrm{K \Seg r}| \leq
 |\Vnrm{K \Seg l}|$
holds, and hence the inequality \eqref{e:context:2} follows from
$|\Vnrm{K \Seg l}| \leq
 |\Vnrm{G \Seg \varphi (l)}| +d
$.

Combining the assumption (iv) with the inequalities
\eqref{e:context:1} and \eqref{e:context:2} yields
$|K \seg r| \leq 2 \cdot
 \left( |\Vnrm{G \Seg \varphi (l)}| + d
 \right)
$.
On the other hand, since 
$\varphi (l) \in \safepath_G$ by Lemma 
\ref{l:safepath}.\ref{l:safepath:1},
Lemma \ref{l:basic} yields
$|\Vnrm{G \Seg \varphi (l)}| \leq 
 |\Vnrm{G_0}|$.
Therefore
$|K \seg r| \leq 2 \cdot
 \left( |\Vnrm{G_0}| + d
 \right)
 \leq \ell$
holds.
Now, letting $r_H \in V_H$ denote the node corresponding to $r \in V_K$,
we deduce
$\pint{\ell} (H) < \pint{\ell} (G)$
as follows.
\begin{eqnarray*}
&&
\pint{\ell} (H) \\
&=&
\textstyle
\pint{\ell} (H \seg r_H)
+
\sum \left\{ \pj{\ell} (H \seg v) \mid
             v \in \safepath_H \setminus \safepath_H (r_H) \text{ and }
             H \seg v \not\in \TG{C}
     \right\}
\\
&<&
\textstyle
\pint{\ell} (G \seg \varphi (l))
+
\sum \left\{ \pj{\ell} (H \seg v) \mid
             v \in \safepath_H \setminus \safepath_H (r_H) \text{ and }
             H \seg v \not\in \TG{C}
     \right\}
\quad (\text{by Lemma \ref{l:pint}})
\\
&\leq&
\textstyle
\pint{\ell} (G \seg \varphi (l))
+
\sum \left\{ \pj{\ell} (G \seg v) \mid
             v \in \safepath_G \setminus \safepath_G (\varphi (l))
             \text{ and }
             H \seg v \not\in \TG{C}
     \right\}
= \pint{\ell} (G)
\end{eqnarray*}
The second inequality follows from
$\safepath_H \setminus \safepath_H (r_H) \subseteq
 \safepath_G \setminus \safepath_G (\varphi (l))$.
\end{proof}

\begin{lemma}
\label{l:size}
Let $\GS$ be a constructor GRS over a finite signature $\FS$ that is
 precedence-terminating with argument separation and let
$\max \left( \{ \arity (f) \mid f \in \FS \} \cup
             \{ |G \seg r| \mid \exists l \ (G, l, r) \in \Gfin \}
      \right) \leq d$.
Suppose the assumptions {\rm (i)--(iv)} in Theorem \ref{t:context}
are fulfilled.
Then, for any closed basic term graph $G_0 \in \mathcal{TG(F)}$,
if
$G_0 \rewm{n} G$, then
$|G| \leq |G_0| +  n \cdot 
 \left( |\bigcup_{v \in \nrm (\rootnode_{G_0})} V_{G_0 \Seg v}| + 2d 
 \right)
$
holds.  
\end{lemma}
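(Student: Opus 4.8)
The proof will go by induction on $n$, the crux being that a single $\rew$-step enlarges the node set by at most $|\Vnrm{G_0}| + 2d$. In the base case $n = 0$ we have $G = G_0$ and the bound is trivial. For the induction step, suppose $G_0 \rewm{n} G \rew H$, where $G \rew H$ is induced by a redex $(R, \varphi)$ for a rule $R = (K, l, r) \in \GS$ and a homomorphism $\varphi : K \seg l \rightarrow G$, and where the induction hypothesis gives $|G| \le |G_0| + n \cdot (|\Vnrm{G_0}| + 2d)$. I would first isolate the purely combinatorial fact that, by the build/redirection/garbage-collection description of $\rew$ recalled in Section \ref{s:pre}, the only nodes that $H$ has beyond those of $G$ are fresh copies of the nodes of $K \seg r$ that do not already occur in $K \seg l$ --- the undefined nodes of $K \seg r$ occur in $K \seg l$ by the standing assumption on rules, so these extra nodes are labeled --- while garbage collection only deletes nodes; hence $|H| \le |G| + |V_{K \Seg r} \setminus V_{K \Seg l}|$. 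It then suffices to bound $|V_{K \Seg r} \setminus V_{K \Seg l}|$ by $|\Vnrm{G_0}| + 2d$.

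For this I would reuse the case analysis from the proof of Theorem \ref{t:context}. If $(K, l, r) \in \Gfin$, then $|V_{K \Seg r} \setminus V_{K \Seg l}| \le |K \seg r| \le d$. If $(K, l, r) \in \Ginf$, distinguish the two clauses by which $K \seg r \spt K \seg l$ holds. If $K \seg r \eqspt K \seg u$ for a successor node $u$ of $l$, then $K \seg r \subg K \seg l$ as noted after Definition \ref{d:ptas}, so $V_{K \Seg r} \setminus V_{K \Seg l} = \emptyset$. Otherwise, for each $v \in \nrm(r)$ the graph $K \seg v$ is a sub-term graph of some $K \seg u$ with $u \in \nrm(l)$, so $\Vnrm{K \Seg r} \subseteq \Vnrm{K \Seg l} \subseteq V_{K \Seg l}$, whence $V_{K \Seg r} \setminus V_{K \Seg l} \subseteq V_{K \Seg r} \setminus \Vnrm{K \Seg r}$ and assumption (iv) gives $|V_{K \Seg r} \setminus V_{K \Seg l}| \le |K \seg r| - |\Vnrm{K \Seg r}| \le |K \seg l|$. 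Then the inequality \eqref{e:context:1} from the proof of Theorem \ref{t:context} yields $|K \seg l| \le |\Vnrm{G \Seg \varphi (l)}| + d$, and since $\varphi (l) \in \safepath_G$ by Lemma \ref{l:safepath}.\ref{l:safepath:1}, Lemma \ref{l:basic} gives $|\Vnrm{G \Seg \varphi (l)}| \le |\Vnrm{G_0}|$. In all cases $|V_{K \Seg r} \setminus V_{K \Seg l}| \le |\Vnrm{G_0}| + d \le |\Vnrm{G_0}| + 2d$, so $|H| \le |G| + |\Vnrm{G_0}| + 2d \le |G_0| + (n+1) \cdot (|\Vnrm{G_0}| + 2d)$, completing the induction.

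The base case and the final arithmetic are routine. The step I expect to be the main obstacle is the first one: making precise, from the build/redirection/garbage-collection semantics, that a rewrite step adds exactly fresh copies of $V_{K \Seg r} \setminus V_{K \Seg l}$ and never more --- in particular that nodes shared between $K \seg l$ and $K \seg r$ are reused rather than duplicated, and that garbage collection never forces us to re-introduce nodes --- so that $|H| \le |G| + |V_{K \Seg r} \setminus V_{K \Seg l}|$ is genuinely justified. Everything after that reduces to bookkeeping already performed for Theorem \ref{t:context}.
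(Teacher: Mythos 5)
Your proposal is correct and follows essentially the same route as the paper: induction on $n$ with a per-step bound obtained from the $\Gfin$/$\Ginf$ split, the case analysis on how $K \seg r \spt K \seg l$ holds, assumption (iv), the bound $|K \seg l| \leq |\Vnrm{G \Seg \varphi (l)}| + d$ coming from maximal sharing and assumptions (ii)--(iii), and Lemmas \ref{l:safepath} and \ref{l:basic} to pass to $|\Vnrm{G_0}|$. The only difference is bookkeeping: you count the genuinely fresh nodes $V_{K \Seg r} \setminus V_{K \Seg l}$ added by the build phase directly (which even yields a marginally sharper per-step constant of $|\Vnrm{G_0}| + d$ in the $\Ginf$ case), whereas the paper estimates $|H| \leq |G| + |H \seg r_H| - |G \seg \varphi (l)|$ and passes through $|\Vnrm{K \Seg l}|$, incurring the second $+d$.
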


\begin{proof} 
By induction on $n$.
In the base case $n=0$, $G = G_0$ and hence the assertion trivially holds.
For the induction step, suppose that
$G_0 \rewm{n} G$ holds and that $G \rew H$
is induced by a redex $(R, \varphi)$ in $G$ for a rule
$R= (K, l, r) \in \GS$ and a homomorphism
$\varphi : K \seg l \rightarrow G$.
In case $R \in \Gfin$,
$|H| \leq |G| +d$
by the choice of the constant $d$.
Suppose $R \in \Ginf$.
As in the proof of Theorem \ref{t:context}, the homomorphism
$\varphi$ is injective over
$\Vnrm{K \Seg l} \cap \dom{\lab_K}$
by maximal sharing of $(K \seg l) \cap \nrm$.
Thus, by the assumptions (ii) and (iii),
at most $d$ nodes in $V_{K \Seg l}$ can be shared by the
homomorphism $\varphi$.
These observations imply
$|H| \leq 
 |G| + |H \seg r_H| - |G \seg \varphi (l)| \leq
 |G| + |K \seg r| + d - |K \seg l|
$
for the node $r_H \in V_H$ corresponding to $r \in V_K$.
Therefore
$|H| \leq |G| + |\Vnrm{K \Seg l}| + d$
holds by the assumption {\rm (iv)}
$|K \seg r| \leq |K \seg l| + |\Vnrm{K \Seg l}|$.
For the same reason as above
$|\Vnrm{K \Seg l}| \leq
 |\Vnrm{G \Seg \varphi (l)}| +d$
holds, and thus
$|H| \leq |G| + |\Vnrm{G \Seg \varphi (l)}| + 2d$
holds.
On the other hand, since
$\varphi (l) \in \safepath_G$ by Lemma
\ref{l:safepath:1}.\ref{l:safepath},
$|\Vnrm{G \Seg \varphi (l)}| \leq
 |\Vnrm{G_0 \Seg \rootnode_{G_0}}|$
holds by Lemma \ref{l:basic}.
Therefore
$|H| \leq |G| + |\Vnrm{G \Seg \rootnode_{G_0}}| + 2d$
holds.
Combining this inequality with the induction hypothesis
$|G| \leq |G_0| + n \cdot 
 \left( |\Vnrm{G_0 \Seg \rootnode_{G_0}}| +2d 
 \right)
$
allows us to conclude
$|H| \leq |G_0| + 
 (n+1) \cdot 
 \left( |\Vnrm{G_0 \Seg \rootnode_{G_0}}| +2d
 \right)
$.
\end{proof}

\begin{corollary}
\label{c:spt}
Suppose that $\GS$ is a constructor GRS over a finite signature $\FS$
 precedence-terminating with argument separation that enjoys
the assumptions {\rm (i)--(iv)} in Theorem \ref{t:context}.
Then there exists a polynomial
$p: \mathbb N \rightarrow \mathbb N$
such that, for any closed basic term graph
${G_0} \in \mathcal{TG(F)}$
and for any term graph $G \in \mathcal{TG(F)}$,
if $G_0 \rewm{m} G$, then the following two conditions hold.
\begin{enumerate}
\item $m \leq p \left( |\bigcup_{v \in \nrm (\rootnode_{G_0})} V_{{G_0} \Seg v}|
      \right)$.
\label{c:spt:1}
\item $|G| \leq p \left( |\bigcup_{v \in \nrm (\rootnode_{G_0})} V_{{G_0} \Seg v}| \right) +
       |V_{G_0} \setminus \bigcup_{v \in \nrm (\rootnode_{G_0})} V_{{G_0} \Seg v}|$.
\label{c:spt:2}
\end{enumerate}
\end{corollary}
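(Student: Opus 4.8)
The plan is to combine the one-step decrease of the numerical interpretation (Theorem~\ref{t:context}) with the linear bound on size growth (Lemma~\ref{l:size}); the only genuinely new ingredient is a polynomial upper bound on $\pint{\ell}(G_0)$ itself, which is where the hypothesis that $G_0$ is basic is exploited.

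\emph{Bounding the length $m$.} For a given closed basic $G_0$ I would set $\ell := 2\,(|\Vnrm{G_0}| + d) + 1$. This is a positive natural and satisfies the hypothesis $2\,(|\Vnrm{G_0}| + d) \le \ell$ of Theorem~\ref{t:context}, so along any reduction $G_0 = H_0 \rew H_1 \rew \cdots \rew H_m = G$ the theorem yields $\pint{\ell}(H_{i+1}) < \pint{\ell}(H_i)$ for every $i < m$ (applying it with $G_0 \rewast H_i$ and $H_i \rew H_{i+1}$). Hence $\pint{\ell}(G_0) > \pint{\ell}(H_1) > \cdots \ge 0$ is a strictly descending sequence of naturals of length $m+1$, and so $m \le \pint{\ell}(G_0)$.

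\emph{Computing $\pint{\ell}(G_0)$.} Since $G_0$ is basic, $f := \lab_{G_0}(\rootnode_{G_0}) \in \DS$ and $G_0 \seg v \in \TG{C}$ for every successor $v$ of $\rootnode_{G_0}$; therefore any node $v \in \safepath_{G_0}$ other than $\rootnode_{G_0}$ lies inside $G_0 \seg v'$ for some successor $v'$ of $\rootnode_{G_0}$, hence $G_0 \seg v$ is a sub-term graph of $G_0 \seg v' \in \TG{C}$, so $G_0 \seg v \in \TG{C}$. Thus the root is the only node contributing to the sum in Definition~\ref{d:pint}, and $\pint{\ell}(G_0) = (1+\ell)^{2\,\rk(f)} \cdot \bigl(1 + \sum_{u \in \nrm(\rootnode_{G_0})} \depth(G_0 \seg u)\bigr)$. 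Now $\rk(f) \le r$, where $r := \max\{\rk(g) \mid g \in \FS\}$ is a constant because $\FS$ is finite; each $\depth(G_0 \seg u) \le |V_{G_0 \seg u}| \le |\Vnrm{G_0}|$; and there are at most $\arity(f) \le d$ normal successors. Hence $\pint{\ell}(G_0) \le (1+\ell)^{2r}\,(1 + d\,|\Vnrm{G_0}|)$, and substituting $\ell = 2\,(|\Vnrm{G_0}| + d) + 1$ bounds this by a polynomial $p_1$ in $|\Vnrm{G_0}|$ of degree $2r+1$ whose coefficients depend only on $\FS$, the precedence, and $d$. This settles Part~\ref{c:spt:1} with $p_1$.

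\emph{Bounding the size and concluding.} By Lemma~\ref{l:size}, $|G| \le |G_0| + m\,(|\Vnrm{G_0}| + 2d)$. Since $\rootnode_{G_0} \notin \Vnrm{G_0}$ we have $|G_0| = |\Vnrm{G_0}| + |V_{G_0} \setminus \Vnrm{G_0}|$, and substituting $m \le p_1(|\Vnrm{G_0}|)$ gives $|G| \le \bigl(|\Vnrm{G_0}| + p_1(|\Vnrm{G_0}|)\,(|\Vnrm{G_0}| + 2d)\bigr) + |V_{G_0} \setminus \Vnrm{G_0}|$, so Part~\ref{c:spt:2} holds with $p_2(x) := x + p_1(x)\,(x + 2d)$. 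Taking $p := p_1 + p_2$, which has nonnegative coefficients and hence dominates both $p_1$ and $p_2$ on $\mathbb N$, yields a single polynomial witnessing both statements. The one point requiring care is the closed-form evaluation of $\pint{\ell}(G_0)$: it is essential that basicness forces every proper sub-term graph on a safe path from the root to be a constructor term graph, so that the interpretation of $G_0$—and therefore the number of rewrite steps—is governed by $|\Vnrm{G_0}|$ alone rather than by the potentially far larger total size $|G_0|$; the rest is routine arithmetic using that ranks and arities are bounded by constants.
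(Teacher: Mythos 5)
Your argument is correct and follows essentially the same route as the paper: bound $m$ by $\pint{\ell}(G_0)$ via Theorem \ref{t:context} with $\ell$ chosen linear in $|\Vnrm{G_0}|$, observe that basicness makes the root the only contributing node so $\pint{\ell}(G_0)=\pj{\ell}(G_0)\leq(1+\ell)^{2\rk(f)}(1+d\,|\Vnrm{G_0}|)$, and then feed the resulting polynomial bound on $m$ into Lemma \ref{l:size} for the size bound. The paper leaves the second part as "clear from $q$ and Lemma \ref{l:size}"; your explicit $p_2$ and the combination $p:=p_1+p_2$ just spell out that omitted step.
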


\begin{proof}
We only show the existence of a witnessing polynomial
$q: \mathbb N \rightarrow \mathbb N$ for Property \ref{c:spt:1}.
The construction of a polynomial $p$ witnessing both Property
\ref{c:spt:1} and \ref{c:spt:2} will be clear from the polynomial $q$
 and Lemma \ref{l:size}.
Given a GRS $\GS$ over a finite signature $\FS$,
let
$\max \{ \arity (f) \mid f \in \FS \} \leq d$.
In addition, given a closed basic term graph
${G_0} \in \mathcal{TG(F)}$,
let
$2 \cdot
 \left( |\Vnrm{{G_0} \Seg \rootnode_{G_0}}| + d
 \right)
 \leq \ell$.
Suppose that ${G_0} \rewm{m} G$ holds for some term graph
$G \in \mathcal{TG(F)}$.
By Theorem \ref{t:context},
$m$ can be bounded by $\pint{\ell} ({G_0})$.
Since
${G_0} \seg v \in \mathcal{TG(C)}$
for any node
$v \in \safepath_{G_0} \setminus \{ \rootnode_{G_0} \}$,
$\pint{\ell} ({G_0}) = \pj{\ell} ({G_0})$
holds.
Now let $q$ denote a polynomial such that
$(2 x +2d +1)^{2 \cdot \max \{ \rk (f) \mid f \in \FS \}} \cdot (1 + d x) \leq
  q (x)$.
Since
$\sum_{v \in \nrm (\rootnode_{G_0})} \depth ({G_0} \seg v) \leq 
 d \cdot |\Vnrm{{G_0} \Seg \rootnode_{G_0}}|$,
the inequality
$2 \cdot
   \left( |\Vnrm{{G_0} \Seg \rootnode_{G_0}}| + d 
   \right) \leq \ell$
allows us to conclude
$m \leq \pint{\ell} ({G_0}) \leq 
q \left( |\Vnrm{{G_0} \Seg \rootnode_{G_0}}|
  \right)$.
\end{proof}

\begin{remark}
\label{r:extension}
The assumption (iv) in Theorem \ref{t:context} can be
 relaxed as
$|K \seg r| \leq |K \seg l| + p \left( \Vnrm{K \Seg l} \right)$
for some polynomial $p$ if
$\ell$ is sufficiently large so that a certain polynomial in
$|\Vnrm{G_0}|$ determined by $p$ can be bounded by $\ell$.
Since such a relaxed form of the condition (iv) likely holds under a suitable
 term rewriting adoption of unfolding graph rewrite rules,
it turns out that just unfolding a recursion schema seems not crucial to deduce
 the polynomial complexity.
But, more importantly, as implied from the assumption (iii), the number of variables
 occurring in the right-hand side of every rule can be constantly
 bounded, which clearly fails in any reasonable term rewriting
 formulation of unfolding rewrite rules.
\end{remark}
The next lemma ensures that the assumption (i) in Theorem
\ref{t:context} is not too restrictive.

\begin{lemma}
\label{l:Gnrm}
Let $\GS$ be a constructor GRS over a finite signature $\FS$
precedence-terminating with argument separation.
For any maximally shared, closed basic term graph
$G_0 \in \TG{F}$, if  
$G_0 \rewast G$, then
$(G \seg v) \cap \nrm$ is maximally shared for any
$v \in \safepath_G$.
\end{lemma}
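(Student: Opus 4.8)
The plan is to induct on the length $n$ of a reduction $G_0 \rewm{n} G$, after strengthening the conclusion to: for every $v \in \safepath_G$ the \emph{normal region} $\bigcup_{u \in \nrm(v)} V_{G\Seg u}$ of $v$ is \emph{jointly maximally shared}, meaning that any two of its nodes with equal term representation coincide. Two preliminary reductions make this suffice. First, if $\lab_G(v)$ is a constructor or undefined then $\nrm(v) = \emptyset$, so $(G \seg v)\cap\nrm$ consists of its root together with pairwise distinct fresh undefined nodes (or is $G \seg v$ itself, which on a safe path carries no non-trivial sharing), hence is automatically maximally shared; so only $v$ with $\lab_G(v) \in \DS$ need attention. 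Second, for such $v$, Lemma \ref{l:basic} together with the hypothesis that $G_0$ is a \emph{closed} basic term graph gives that every $G \seg u$ with $u \in \nrm(v)$ is a closed constructor term graph; so in $(G \seg v)\cap\nrm$ no coincidence of term representations is possible across its three kinds of node --- the root (a defined symbol), the nodes of the normal region (closed constructor terms over $\CS$), and the fresh undefined successors (distinct variables) --- whence $(G \seg v)\cap\nrm$ is maximally shared exactly when its normal region is jointly maximally shared. This reduces the lemma to the strengthened statement.

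The base case $n = 0$ is immediate: for $v = \rootnode_{G_0}$ the normal region is a union of sub-term graphs of the maximally shared $G_0$, hence jointly maximally shared; for $v \neq \rootnode_{G_0}$, $G_0$ being basic forces $\lab_{G_0}(v)$ to be a constructor or undefined, the trivial case. For the induction step, let $G_0 \rewm{n} G \rew H$ with $G \rew H$ induced by a redex $(R,\varphi)$ for a rule $R = (K,l,r)$ and $\varphi : K \seg l \to G$; by Lemma \ref{l:safepath}.\ref{l:safepath:1} the node $\varphi(l)$ lies on $\safepath_G$, and $G, H \in \TGnrm$ by Lemma \ref{l:safepath}.\ref{l:safepath:2}. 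Fix $v_H \in \safepath_H$ with $\lab_H(v_H) \in \DS$. I would first establish the dichotomy that either (a) $v_H \in \safepath_G$, or (b) $v_H$ is a node newly created by the rewrite, namely a fresh copy of a node $v$ of $K \seg r$ reachable from $r$ through safe positions, so that $K \seg v \spt K \seg l$. The dichotomy rests on the preliminary fact that normal parts are closed constructor term graphs: a node of $G$ carrying a defined symbol and lying on a safe path from the newly built right root must already have lain on a safe path from $\varphi(l)$, hence from $\rootnode_G$. In case (a), the nodes of the normal region of $v_H$ are constructor-labelled, so untouched by the edge redirection of the rewrite, and the normal region of $v_H$ in $H$ coincides with that in $G$, which is jointly maximally shared by the induction hypothesis. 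In case (b), since $\GS$ is a constructor GRS, Case \ref{d:ptas:1} of Definition \ref{d:ptas} only ever produces constructor labels and is already covered, so we are in Case \ref{d:ptas:2}, whence every normal successor of $v$ in $K$ already occurs below a normal successor of $l$; applying $\varphi$, and using again that the images are constructor-labelled and hence untouched by redirection, the normal region of $v_H$ in $H$ is a subset, with identical labels and successors, of the normal region of $\varphi(l)$ in $G$, which is jointly maximally shared by the induction hypothesis. Since a subset closed under successors inherits joint maximal sharing, the normal region of $v_H$ in $H$ is jointly maximally shared, and thus $(H \seg v_H)\cap\nrm$ is maximally shared, completing the induction.

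I expect the main obstacle to be case (b) of the induction step: making precise that one rewrite step does not disturb sharing inside normal argument positions. One must descend through the \emph{build}, \emph{redirection} and \emph{garbage collection} phases of $\rew$ carefully enough to confirm that no node of the normal region of a newly built node is a \emph{fresh} copy produced by \emph{build} --- here one uses that, for a constructor GRS precedence terminating with argument separation, every normal successor occurring in the right root of a rule already occurs below the left root and so is never duplicated --- and that no such node is the target of a redirected edge, which holds because the normal region at $\varphi(l)$ consists only of constructor-labelled nodes whereas redirection retargets only edges pointing to $\varphi(l)$. In short, what is needed is a sharing-aware strengthening of Lemma \ref{l:basic}: not just that each individual $H \seg u$ with $u \in \nrm(v_H)$ is a sub-term graph of a normal part of $G_0$, but that the whole normal region of a node of $\safepath_H$ occurs as a literal substructure of the normal region of a node of $\safepath_G$.
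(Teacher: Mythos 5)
Your proposal is essentially correct, but it takes a genuinely different and much heavier route than the paper. You run an induction on the length of the reduction $G_0 \rewast G$ with a strengthened invariant (joint maximal sharing of the normal region of every node on a safe path) and then argue through the build, redirection and garbage-collection phases that one rewrite step leaves normal regions intact --- in effect re-proving a sharing-aware version of Lemma \ref{l:basic} from scratch. The paper instead observes that no new induction is needed: Lemma \ref{l:basic} is already strong enough, because $G \seg u \ssubnrm G_0$ is a statement about literal sub-term graphs, so for any $u_j \in \Vnrm{G \Seg v}$ with $v \in \safepath_G$ one gets $G \seg u_j = G_0 \seg u_j$ node-for-node; hence if $\term_G (G \seg u_0) = \term_G (G \seg u_1)$ then $\term_{G_0} (G_0 \seg u_0) = \term_{G_0} (G_0 \seg u_1)$, and maximal sharing of $G_0$ itself gives $u_0 = u_1$. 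So the ``sharing-aware strengthening'' you call for at the end is not actually needed: sharing is inherited directly from $G_0$ rather than propagated step by step. What your route buys is robustness --- it would still work if the sub-term-graph relation were only read up to copying/isomorphism, and it makes explicit the phase-level facts (normal successors are constructor-labelled, hence never the redex root, never redirected, never freshly built) that the paper leaves implicit inside the proof of Lemma \ref{l:basic}; what it costs is duplicating that lemma's induction, plus some care you gloss over in the collapsing case (Case \ref{d:ptas:1} at the top level), where the right root $r_H$ is an old node $\varphi(r)$ rather than ``newly built'', so your dichotomy argument must be phrased without assuming fresh nodes exist --- it still goes through, since $\varphi(r)$ either sits inside a closed constructor region or lies on a safe path from $\varphi(l)$.
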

 
\begin{proof}
Let $v \in \safepath_G$ and
$u_0, u_1 \in V_{G \Seg v}$.
Assume 
$\term_G (G \seg u_0) = \term_G (G \seg u_1)$.
By the definition of the term graph
$(G \seg v) \cap \nrm$,
it suffices to consider the case
$u_0, u_1 \in \Vnrm{G \Seg v}$.
In this case, by Lemma \ref{l:basic},
$G \seg u_j \ssubnrm G_0$
holds for each $j = 0, 1$.
This means that
$G \seg u_j = G_0 \seg u_j$
holds for each $j=0, 1$, and thus
$\term_{G_0} (G_0 \seg u_0) =
 \term_{G_0} (G_0 \seg u_1)$
holds by the assumption.
Maximal sharing of $G_0$ implies $u_0 = u_1$.
\end{proof}

As a consequence of Lemma
\ref{l:safepath:1}.\ref{l:safepath}
and Lemma \ref{l:Gnrm},
for any (completely defined) constructor GRS $\GS$ over a finite
signature that is precedence terminating with argument separation,
if there exists a constant $d$ such that
the assumptions (i)--(iv) in Theorem \ref{t:context} hold
for any rule $(K, l, r) \in \Ginf$, then
any rewriting sequence
$G_0 \rew G_1 \rew \cdots$
starting with a maximally shared, closed basic term graph
$G_0$ leads to a constructor term graph in normal form.

\begin{theorem}
\label{l:gsr}
Every general safe recursive function can be computed by a
 constructor GRS that precedence terminating with an argument separation 
fulfilling the conditions {\rm (i)--(iv)} in Theorem \ref{t:context}.
\end{theorem}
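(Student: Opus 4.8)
The plan is to argue by induction on a derivation of $f$ as a general safe recursive function, constructing in each case a constructor GRS $\GS_f$ together with an argument separation and a precedence $\sp$ under which $\GS_f$ is precedence terminating with argument separation and satisfies the conditions (i)--(iv) of Theorem~\ref{t:context}. At a composite step I would take the union of the GRSs supplied by the induction hypothesis for the immediate subfunctions, merge their precedences along the common definitional hierarchy (keeping every constructor $\sp$-minimal), and place the newly introduced defined symbol strictly above everything already present; since the combined signature stays finite this is again a well-founded precedence, and because only the relative order of the symbols affects the relation $\spt$, the rules inherited from the sub-GRSs keep satisfying $G\seg r\spt G\seg l$ and, where applicable, (i)--(iv). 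Finally I would take the constant $d$ of Theorem~\ref{t:context} to be any bound on the arities of the symbols of the final signature and on the sizes $|K\seg r|$ of the right-hand sides of the finitely many rules added along the way.

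For the initial functions and for the schemata (Constants), (Constructors), (Projections), (Predecessors), (Conditional) and (Safe composition) the new symbol is defined by a single rule $(G,l,r)$ with a basic left root, so it belongs to $\Dfin$; hence (i)--(iv), which constrain only $\Ginf$, need not be checked for it, and it only remains to verify $G\seg r\spt G\seg l$. For (Projections) and (Conditional) the right root $r$ is a successor of $l$, and for (Predecessors) it is a successor of $l$ or is reached from $l$ through a single constructor-labelled successor, so Case~\ref{d:ptas:1} of Definition~\ref{d:ptas} applies (after at most one nested descent). For (Constants) and (Constructors) the right root is a constructor---a constant, respectively one whose successors are undefined nodes shared with successors of $l$---so Case~\ref{d:ptas:2} applies, its normal clause being vacuous and its safe clause vacuous or discharged by Case~\ref{d:ptas:1}. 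For (Safe composition) I would let $r$ be an $h$-node whose normal successors are the shared undefined nodes $x_{j_1},\dots,x_{j_m}$ and whose safe successors are $g_i$-nodes each carrying the $x$-nodes as normal and the $y$-nodes as safe successors; with $f$ placed strictly above $h,g_1,\dots,g_n$, the inequality $G\seg r\spt G\seg l$ follows from Case~\ref{d:ptas:2}, since every normal successor of $r$ is a normal successor of $l$ and every safe successor of $r$ is a $g_i$-node which again satisfies $\spt G\seg l$ by Case~\ref{d:ptas:2}. All the separations respect the convention that constructors carry only safe positions.

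The substantive case is (General safe recursion). Here $f$ is defined by the infinite GRS $\bigcup_{m\geq1}\GS_m$ of \emph{safe recursive} unfolding graph rewrite rules of Definitions~\ref{d:uf} and~\ref{d:sruf}: each such rule $(K,y,w_1)$ has $\Sigma$ the set of constructors, $\Theta=\{h_j\}$ in bijection with $\Sigma$, $v_1\in\nrm(y)$ (recursion on the normal first argument of $f$), the remaining normal parameters $\vec y$ in the positions $x_1,\dots,x_k$ and the safe parameters $\vec z$ in $x_{k+1},\dots,x_{k+l}$, with the additional requirement that the skeleton $K\seg v_1$ be maximally shared; $\GS_f$ is the union of this family with the GRSs for the $h_j$ given by the induction hypothesis, and $f$ is placed strictly above every $h_j$. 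That $\GS_f$ computes $f$ is, apart from the explicit argument separation, the construction of Dal Lago, Martini and Zorzi underlying Theorem~\ref{t:GRR10}; the restriction to maximally shared skeletons loses no redexes, because by Lemmas~\ref{l:basic} and~\ref{l:Gnrm} every normal part of a term graph reachable from a maximally shared closed basic term graph is again a maximally shared (closed) sub-term graph of the starting graph, so every $f$-redex that arises has a maximally shared recursion argument and is matched. For precedence termination with argument separation, i.e.\ $K\seg w_1\spt K\seg y$, I would induct on the skeleton: $\lab_K(w_j)$ is some $h_j\sp f$; by Corollary~\ref{c:sruf} every normal successor $v'$ of $w_j$ is either a skeleton node, so that $K\seg v'$ is a sub-term graph of $K\seg v_1$ with $v_1\in\nrm(y)$, or one of $x_1,\dots,x_k$, which together gives the $\ssubnrm$ clause of Case~\ref{d:ptas:2}; and every safe successor of $w_j$ is either one of $x_{k+1},\dots,x_{k+l}$ (handled by Case~\ref{d:ptas:1}) or a $w$-child (handled by the induction hypothesis). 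The conditions (i)--(iv) for each rule $(K,y,w_1)\in\Ginf$ I would check directly: (i) $(K\seg y)\cap\nrm$ consists of $y$, the maximally shared skeleton and the pairwise distinct undefined normal parameters, hence is maximally shared; (ii) the only labelled normal successor of $y$ is $v_1$ and $K\seg v_1$ is a closed constructor term graph; (iii) the undefined normal successors of $y$ are $x_1,\dots,x_k$ and the safe successors of $y$ contribute only $x_{k+1},\dots,x_{k+l}$, so the relevant set has the constant size $k+l=\arity(f)-1$; (iv) a node count gives $|K\seg w_1|=2m-1+k+l$ whereas $|K\seg y|+|\bigcup_{v\in\nrm(w_1)}V_{K\Seg v}|=(1+m+k+l)+((m-1)+k)$, so the inequality holds with room to spare.

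I expect the main obstacle to be confined to this last case: checking that restricting to maximally shared skeletons---which condition~(i) forces---still produces a GRS completely defined on maximally shared closed basic term graphs (this is precisely where Lemmas~\ref{l:basic} and~\ref{l:Gnrm} are needed), followed by the $\spt$-verification via Corollary~\ref{c:sruf} and the node counting behind~(iv); a minor point is the consistent merging of the finitely many induction-hypothesis precedences into one well-founded precedence on the combined signature. The remaining cases become routine once the pattern ``new symbol strictly above its subfunctions, single rule with a basic left root'' is in place.
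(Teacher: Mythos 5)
Your proposal is correct and follows essentially the same route as the paper's own proof: induction over the definition of $f$, with single $\Gfin$-rules (left root basic, new symbol placed above its subfunctions, constructors minimal) for the initial functions and safe composition, and constructor safe recursive unfolding graph rewrite rules for the recursion case, with $\spt$ verified via Corollary \ref{c:sruf} and conditions (i)--(iv) secured by choosing maximally shared left-hand sides. Your explicit node count for (iv) and the appeal to Lemmas \ref{l:basic} and \ref{l:Gnrm} to justify that maximal sharing loses no redexes only spell out details the paper leaves implicit (the latter is noted after Lemma \ref{l:Gnrm} rather than inside the proof).
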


\begin{proof}
By induction over the definition of $f$.
In the base case, every initial function can be defined by a single
 constructor rewrite rule $(G, l, r)$ in
one of the following shapes \ref{d:srgrs:1} and \ref{d:srgrs:2}.
\begin{enumerate}
\item $G \seg r = G \seg v$ for some successor node
      $v$ of $l$.
\label{d:srgrs:1}
\item 
\label{d:srgrs:2}
        $V_G$ consists of 
        $2+k+l+n$ elements
        $u$, $v$, $x_1, \dots, x_{k+l}$,
        $w_1, \dots, w_n$ such that
        $l = u$, $r = v$,
  \begin{itemize}
  \item $\{ \lab_G (u), \lab_G (v), \lab_G (w_1), \dots, \lab_G (w_n)
         \} \subseteq \FS$,
  \item $\lab_G (x_j)$ is undefined for all 
        $j \in \{ 1, \dots, k+l \}$,
  \item $\att_G (u) =
         \sn{x_1, \dots, x_k}{x_{k+1}, \dots, x_{k+l}}$,
  \item $\att_G (v) =
         \sn{x_{j_1}, \dots, x_{j_m}}{w_1, \dots, w_n}$
        for some
        $\{ j_1, \dots, j_m \} \subseteq \{ 1, \dots, k \}$, and
  \item $\att_G (w_j) =
         \sn{x_1, \dots, x_k}{x_{k+l}, \dots, x_{k+l}}$
        for all $j \in \{ 1, \dots, n \}$.
  \end{itemize}
\end{enumerate}
The graph rewrite rule (1) below is an instance of Case 
\ref{d:srgrs:2} with $k=2$, $l=1$ and $n=2$,
which expresses the term rewrite rule
$\mf (\sn{x_1, x_2}{x_3}) \rightarrow
 \mh (\sn{x_1}{\mg_1 (\sn{x_1, x_2}{x_3}),
               \mg_2 (\sn{x_1, x_2}{x_3})
              }
     )$.
As in Figure \ref{fig:srugr}, every edge
$\xymatrix{v \ar[r] & u}$ is expressed as
$\xymatrix{v \ar@{-->}[r] & u}$ if
$u \in \safe (v)$ and $\lab_G (v) \in \DS$.
\\
$\mbox{}$ \hfill
$
  \xymatrix{(1) & &
            *+[F]{\mh} \ar@/_/[ddll] \ar@{-->}[dl] \ar@{-->}[d] \\
            *+[o][F-]{\mf} \ar[d] \ar[dr] \ar@{-->}[drr] &
            \mg_1 \ar[dl] \ar[d] \ar@{-->}[dr] &
            \mg_2 \ar[dll] \ar[dl] \ar@{-->}[d] \\
            \bot & \bot & \bot
           }
\qquad
  \xymatrix{(2) &
            *+[o][F-]{\m{o}^{2,2}_j} \ar[dl] \ar[d]
            \ar@{-->}[dr] \ar@{-->}[drr] & &
             *+[F]{\m{c}_j} \\
            \bot & \bot & \bot & \bot
           }
\qquad
  \xymatrix{(3) &
            *+[o][F-]{\mI{2,2}{3}} \ar[dl] \ar[d]
            \ar@{-->}[dr] \ar@{-->}[drr] & & \\
            \bot & \bot & *+[F]{\bot} & \bot
           }
$
\hfill $\mbox{}$ \\
Every instance of {\bf (Constants)} can be defined by a single graph rewrite
 rule as (2) above in a special shape of Case \ref{d:srgrs:2}, and each of
{\bf (Projections)}, {\bf (Predecessors)} and {\bf (Conditional)}
can be defined by a single graph rewrite rule as (3) in the form of 
Case \ref{d:srgrs:1}.
The induction step splits into two cases.
In case that $f$ is defined by {\bf (Safe composition)},
$f$ is defined by a constructor graph rewrite rule in the form of 
Case \ref{d:srgrs:2} together with the
constructor GRSs obtained from induction hypothesis.
In case that $f$ is defined by {\bf (General safe recursion)},
$f$ is defined by an infinite set of constructor safe recursive unfolding graph
 rewrite rules together with the constructor GRSs
 obtained from induction hypothesis.
For instance, suppose that $f$ is defined by
$f(\sn{\epsilon}{z}) = g(\sn{}{z})$
and
$f(\sn{c (\sn{}{x, y})}{z}) =
 h(\sn{x, y}{z, f(\sn{x}{z}), f(\sn{y}{z})}
  )$.
By induction hypothesis, $g$ and $h$ can be respectively computed by
some constructor GRSs $\GS_{\m{g}}$ and $\GS_{\m{h}}$ defining the
 corresponding function symbols
$\mg, \mh \in \DS$.
Let
$\m{e}, \m{c} \in \CS$
respectively correspond to $\epsilon, c$
and $\mf \in \DS$ to $f$
Also let
$\Sigma = \{ \m{e}, \m{c} \}$
and
$\Theta = \{ \mg, \mh \}$
with a bijective correspondence
$\m{e} \mapsto \mg, \m{c} \mapsto \mh$.
Then, for each $m \geq 1$, one can define the $m^{\text{th}}$ set $\GS_m$ of safe recursive unfolding
 graph rewrite rules over $\Sigma \cup \Theta$ defining
$\mf$.
Since $\Sigma \subseteq \CS$, $\GS_m$ is a constructor GRS for every
$m \geq 1$.
Since elements of $\bigcup_{m \geq 1} \GS_m$ express
$\mf (\sn{\m{e}}{z}) \rightarrow \mg (\sn{}{z})$,
$\mf (\sn{\m{c} (\sn{}{\m{e}, \m{e}})}{z}
     ) \rightarrow
 \mh (\sn{\m{e}, \m{e}}{z, \mg (\sn{}{z}), \mg (\sn{}{z})}
     )$,
$\ldots$,
$f$ is computed by the infinite GRS
$\GS_{\mg} \cup \GS_{\mh} \cup 
 \left( \bigcup_{m \geq 1} \GS_m \right)$.

The precedence $\sp$ is defined so that, letting every constructor be
$\sp$-minimal,
 for every rule $(G, l, r)$,
$\lab_G (v) \sp \lab_G (l)$ for any $v \in V_{G \Seg r}$
whenever $\lab_G (v)$ is defined.
Then $\mg \sp \mf$ means that $f$ is defined from $g$ for the
 functions $f, g$ respectively corresponding to $\mf, \mg$.
Hence the well-foundedness of $\sp$ follows from the observation that
the relation ``is defined from'' is well-founded by the definition of
 general safe recursive functions.
Precedence termination of so obtained GRSs is obvious.

Let $\spt$ be the relation induced by the precedence $\sp$.
By definition, the subset $\Ginf$ of $\GS$ consists of safe recursive
 unfolding graph rewrite rules whereas $\Gfin$ contains no unfolding
 graph rewrite rule.
It follows from the definition of safe recursive unfolding graph rewrite
 rules that
$G \seg l \spt G \seg r$
for each $(G, l, r) \in \Ginf$
(See also Corollary \ref{c:sruf}).
Consider a rewrite rule $(G, l, r) \in \Gfin$.
It is obvious that 
$G \seg l \spt G \seg r$
holds if $(G, l, r)$ is an instance of Case \ref{d:srgrs:1}.
Suppose that $V_G$ consists of $2+k+l+n$ elements
$l$, $r$, $x_1, \dots, x_{k+l}$, $w_1, \dots, w_n$
as specified in Case \ref{d:srgrs:2}. 
Let
$v \in V_{G \Seg r} = 
 \{ r, x_1, \dots, x_{k+l}, w_1, \dots, w_n \}$.
Consider the case that $\lab_G (v)$ is not defined, i.e.,
$v \in \{ x_1, \dots, x_{k+l} \}$.
In this case, $v$ is a successor node of $l$.
Namely $G \seg v = G \seg u$ for some successor node $u$ of $l$, and
 hence $G \seg l \spt G \seg v$ holds.
Assume that $\lab_G (v) \in \FS$.
Then $v \in \{ r, w_1, \dots, w_n \}$.
Since
$\att_G (w_j) = \sn{x_1, \dots, x_k}{x_{k+1}, \dots, x_{k+l}}$
for every $j \in \{ 1, \dots, n \}$,
$G \seg l \spt G \seg w_j$
for every $j \in \{ 1, \dots, n \}$.
This yields $G \seg l \spt G \seg v$
since 
 $\att_G (v) = \sn{x_{j_1}, \dots, x_{j_m}}{w_1, \dots, w_n}$
for some
$\{ j_1, \dots, j_m \} \subseteq \{ 1, \dots, k \}$.
The conditions (ii)--(iv) follow from the definition of unfolding
 graph rewrite rules.
Choosing every rewrite rule $(G, l, r) \in \GS$ so that
$(G \seg l) \cap \nrm$ is maximally
 shared allows one to conclude.
\end{proof}
\begin{corollary}
\label{c:main}
For every general safe recursive function $f$, there exist a constructor
 GRS $\GS$ that computes $f$ and a polynomial
$p: \mathbb N \rightarrow \mathbb N$
such that, for any maximally shared, closed basic term graph $G$,
if $G \rewm{m} H$, then
$m \leq p(n)$ and
$|H| \leq p (n) + |G|$
hold, where $n$ denotes the size
$|\bigcup_{v \in \nrm (\rootnode_G)} V_{G \Seg v}|$
(of the union) of the subgraphs connected to
the normal argument positions of $\rootnode_G$ only.
\end{corollary}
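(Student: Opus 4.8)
The plan is to read Corollary~\ref{c:main} off from Theorem~\ref{l:gsr} together with Corollary~\ref{c:spt}. First I would invoke Theorem~\ref{l:gsr}: given a general safe recursive function $f$, fix a constructor GRS $\GS$ over a finite signature $\FS$ that computes $f$, is precedence terminating with argument separation, and fulfils the conditions (i)--(iv) of Theorem~\ref{t:context} for some constant $d$. In particular $\GS$ meets the hypotheses of Corollary~\ref{c:spt}, which therefore supplies a polynomial $p \colon \mathbb{N} \to \mathbb{N}$ such that, for every closed basic term graph $G_0$ and every reduct $G$ with $G_0 \rewm{m} G$, Properties~\ref{c:spt:1} and~\ref{c:spt:2} hold. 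The key point is that, as constructed in the proof of Corollary~\ref{c:spt}, $p$ is built only from $d$ and $\max\{\rk(f) \mid f \in \FS\}$, hence depends on $\GS$ alone and not on the input; this uniformity is what makes the statement meaningful.

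Next I would specialise to the naming of the present statement, i.e.\ apply Corollary~\ref{c:spt} with starting graph $G$ and reduct $H$, and set $n = |\bigcup_{v \in \nrm(\rootnode_G)} V_{G \Seg v}|$. Property~\ref{c:spt:1} then gives $m \le p(n)$ immediately. For the size bound, Property~\ref{c:spt:2} gives $|H| \le p(n) + |V_G \setminus \bigcup_{v \in \nrm(\rootnode_G)} V_{G \Seg v}|$; since the set on the right is a subset of $V_G$, its cardinality is at most $|G|$, so $|H| \le p(n) + |G|$, as claimed. Finally, I would observe that taking $G$ maximally shared is harmless: by Lemma~\ref{l:Gnrm} the relevant normal parts stay maximally shared all along any rewriting sequence starting from a maximally shared closed basic term graph, so the situation under which $\GS$ was produced by Theorem~\ref{l:gsr} persists throughout the reduction.

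I do not expect a genuine obstacle here, since Corollary~\ref{c:main} merely repackages Theorem~\ref{l:gsr}, Theorem~\ref{t:context} and Lemma~\ref{l:size}, where the substantive work was done. The one place demanding a little care is keeping straight the two slightly different node-count measures --- the full size $|\cdot|$ of a term graph versus the size $|\bigcup_{v \in \nrm(\rootnode)} V_{\cdot \Seg v}|$ of its normal part --- and noticing that the inequality $|V_G \setminus \bigcup_{v \in \nrm(\rootnode_G)} V_{G \Seg v}| \le |G|$ is precisely what converts Property~\ref{c:spt:2} into the asserted bound $|H| \le p(n) + |G|$. Together with the fact, already recorded in the excerpt, that bounds on arbitrary rewriting subsume those on innermost rewriting, this also yields the promised alternative proof of Theorem~\ref{t:GRR10}.
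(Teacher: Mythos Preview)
Your proposal is correct and mirrors the paper's implicit argument: the paper gives no separate proof of Corollary~\ref{c:main}, and it is meant to be read off directly from Theorem~\ref{l:gsr} together with Corollary~\ref{c:spt}, with your passage from Property~\ref{c:spt:2} to $|H|\le p(n)+|G|$ via the trivial inclusion $V_G \setminus \bigcup_{v \in \nrm(\rootnode_G)} V_{G \Seg v} \subseteq V_G$ being exactly the intended step. One minor refinement: the maximal-sharing hypothesis on $G$ is not needed for the complexity bounds themselves (Corollary~\ref{c:spt} already applies to arbitrary closed basic term graphs); its role, via Lemma~\ref{l:Gnrm} and the remark preceding Theorem~\ref{l:gsr}, is rather to guarantee that the rewriting sequence reaches a constructor normal form, i.e., that $\GS$ actually computes~$f$.
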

The corollary says that every general safe recursive function can be computed by a polynomially bounded constructor GRS.
Since such a witnessing GRS is polytime presentable in particular, Corollary
\ref{c:main} yields an alternative proof of Theorem \ref{t:GRR10}.

\section{Related works and further application}
\label{s:app}

In this section we discuss two related works to see some
potential applicability of the method presented in the previous section
and one more work to see a limit of the computational power of the
method.

In \cite{Marion03} a term rewrite system $\RS_{\m{lcs}}$, which computes
the length of the
{\em longest common subs-sequence} of two strings, is discussed.
The rewrite system $\RS_{\m{lcs}}$ contains instances of
\\
$\mbox{}$ \hfill
$\begin{array}{rcl}
 \mf (\sn{\epsilon, y, z}{w}) & \rightarrow &
 \mg (\sn{y, z}{w})
 \qquad \qquad
 \mf (\sn{x, \epsilon, z}{w}) \quad \rightarrow \quad
 \mg (\sn{x, z}{w})
 \\
 \mf (\sn{\m{c}_i (x), \m{c}_j (y), z}{w}) & \rightarrow &
 \mh_{i, j} (\sn{x, y, z}{w, \mf (\sn{x, \m{c}_j (y), z}{w}),
                             \mf (\sn{\m{c}_i (x), y, z}{w})
                         }
            ),
 \end{array}
$
\hfill $\mbox{}$
\\
i.e., rewrite rules expressing safe recursion with multiple recursion
arguments.
For exactly the same reason as in case of general safe recursion, 
$\RS_{\m{lcs}}$ only admits a polynomial {\em quasi}-interpretation
which says nothing about polynomial runtime complexity.
Due to the restriction to single recursion arguments, it is not possible to represent these rules as 
instances of (safe recursive) unfolding graph rewrite rules.
However, as seen from an instance (1) below
(where the variable $z$ is ignored to ease the presentation), 
$\RS_{\m{lcs}}$ could be represented by an infinite GRS
fulfilling the assumptions (i)--(iv) in Theorem \ref{t:context}.
\\
$\mbox{}$ \hfill
$
  \xymatrix{(1) &
            *+[o][F-]{\mf} \ar[dl] \ar[d] \ar@/^/@{-->}[dd] & & \\
            \m{c}_i \ar@{..>}[d] & \m{c}_j \ar@{..>}[dl] & &
            *+[F]{\mh_{i, j}} \ar@/_/[dlll] \ar[dlll]
            \ar@{-->}[dll] \ar@{-->}[dl] \ar@{-->}[d] \\
            \epsilon & \bot & \mg \ar@{-->}[l] &
            \mg \ar@{-->}@/^/[ll]
           }
\qquad \qquad
  \xymatrix{(2) &
            *+[o][F-]{\mf_i} \ar[dl] \ar[dd] \ar@{-->}[ddr] & & & \\
            \m{c} \ar@{..>}@/_/[d] \ar@{..>}@/^/[d] & & &
            *+[F]{\mh_{i}} \ar@/_/[dlll] \ar[dlll] \ar[dll]  
            \ar@{-->}[dl]
            \ar@{-->}@/_/[d] \ar@{-->}@/^/[d]
            \ar@{-->}[dr] \ar@{-->}@/^/[dr] & \\
            \epsilon & \bot & \bot &
            \mg_0 \ar@/_/[ll] \ar@{-->}[l] & 
            \mg_1 \ar@/_5mm/[lll] \ar@{-->}@/^/[ll]
           }
$
\hfill $\mbox{}$

In a very recent work \cite{ADL15},
Theorem \ref{t:GRR10} is expanded for 
{\em simultaneous} general safe recursion, e.g.,

$\begin{array}{rcl}
 \mf_i (\sn{\epsilon, z}{w}) & \rightarrow & \mg_i (\sn{z}{w})
 \\
 \mf_i (\sn{\m{c} (x, y), z}{w}) & \rightarrow &
 \mh_i (\sn{x, y, z}{w,
            \mf_0 (\sn{x, z}{w}), \mf_0 (\sn{y, z}{w}),
            \mf_1 (\sn{x, z}{w}), \mf_1 (\sn{y, z}{w})
           }
       )
 \end{array}
$
$(i = 0, 1)$. \\
In contrast to the current approach, instead of taking an advantage of
sharing in term graph rewriting, the notion of
{\em cache} is employed in \cite{ADL15} to avoid costly recomputations.
A similar notion, called {\em minimal function graphs}, can be found in
\cite{Marion03}, yielding that the rewrite system $\RS_{\m{lcs}}$ can be
executed in polynomial time.
As mentioned in Remark \ref{r:extension}, the condition (iv) in Theorem
\ref{t:context} can be relaxed as (iv)'
$|K \seg r| \leq |K \seg l| + O (|\Vnrm{K \Seg l}|)$.
Thus, as seen from an instance (2) above,
such the schema of simultaneous recursion could be also represented by an
infinite GRS enjoying the assumptions (i)--(iii) and (iv)'.

As shown in \cite{LM94}, it is known that the polynomial-space
computable functions can be captured with safe recursion (on notation)
{\em with parameter substitutions}.
To see an explicit boundary of the proposed method, consider the term rewrite
system below that expresses an instance of the schema.
\\
$\mbox{}$ \hfill
$\begin{array}{rcl}
 \mf (\sn{\epsilon}{y}) & \rightarrow &
 \mg (\sn{}{y}) \\
 \mf (\sn{\m{c} (x)}{y}) & \rightarrow &
 \mh (\sn{x}{y, \mf (\sn{x}{\m{p} (\sn{x}{y})}),
                \mf (\sn{x}{\m{q} (\sn{x}{y})})
               })
 \end{array}
$
\hfill \ \\
The rules below are the first three instances of unfolding the above rules.

$\begin{array}{rrcl}
 (0) &
 \mf (\sn{\epsilon}{y}) & \rightarrow &
 \mg (\sn{}{y}) \\
 (1) &
 \mf (\sn{\m{c} (\epsilon)}{y}) & \rightarrow &
 \mh (\sn{\epsilon}{y, \mg (\sn{}{\m{p} (\sn{\epsilon}{y})}),
                   \mg (\sn{}{\m{q} (\sn{\epsilon}{y})})
               }) \\
 (2) &
 \mf (\sn{\m{c} (\m{c} (\epsilon))}{y}) & \rightarrow &
 \mh (\m{c} (\epsilon); y,
      \mh (\epsilon; y,
           \mg (; \m{p} (\epsilon; 
                         \m{p} (\m{c} (\epsilon), y)
                        )
               ),
           \mg (; \m{q} (\epsilon; 
                         \m{p} (\m{c} (\epsilon), y)
                        )
               )
          ), \\
 &&& \hspace{1.5cm}
      \mh (\epsilon; y,
           \mg (; \m{p} (\epsilon; 
                         \m{q} (\m{c} (\epsilon), y)
                        )
               ),
           \mg (; \m{q} (\epsilon; 
                         \m{q} (\m{c} (\epsilon), y)
                        )
               )
          )
     )
 \end{array}
$
\\
One will see that $\mg$ occurs $2^i$ times in the
$i^{\text{th}}$ instance $(i)$ and none of the occurrences can be shared
since their arguments are different.
For this reason, even if they are represented as maximally shared GRS
$\GS$,
$2^{|\Vnrm{K \Seg l}|} \leq |K \seg r|$
for every $(K, l, r) \in \Ginf$,
and thus (even a relaxed form of) the condition (iv) fails.

\section{Conclusion}

Generalizing unfolding graph rewrite rules that express the schema
\eqref{eq:gsr}, we proposed restrictive
precedence termination orders,
precedence termination with argument separation.
The restrictive notion together with suitable assumptions
yields a new criterion for
the polynomial runtime complexity of infinite GRSs and for the
polynomial-size normal forms in infinite GRSs.
As discussed in the last section, the proposed method can be potentially expanded for 
safe recursion with multiple recursion arguments or
simultaneous general safe recursion, and thus is indeed more flexible
than unfolding graph rules at least in a limited sense.
It should be stressed, however, that it is unclear how to express
infinite instances of those
recursion schemata with infinite graph rewrite rules in a uniform way.




\end{document}